\documentclass[sigconf, nonacm]{acmart}
\AtBeginDocument{%
  }

\setcopyright{none}
\copyrightyear{2018}
\acmYear{2018}
\acmDOI{XXXXXXX.XXXXXXX}
\acmConference[Conference acronym 'XX]{Make sure to enter the correct
 conference title from your rights confirmation email}{June 03--05,
 2018}{Woodstock, NY}

\acmBooktitle{Woodstock '18: ACM Symposium on Neural Gaze Detection, June 03--05, 2018, Woodstock, NY}
\acmISBN{978-1-4503-XXXX-X/2018/06}

\usepackage{xspace}
\usepackage{enumitem}
\usepackage{multirow}
\usepackage{makecell}
\usepackage{subcaption}
\usepackage{graphicx}
\usepackage[ruled,linesnumbered,noend]{algorithm2e}
\usepackage{pifont}
\SetKwComment{Comment}{$\triangleright$\ }{}
\SetKwInput{KwRequire}{Require}
\SetKwInput{KwEnsure}{Ensure}

\newcommand{\sysname}{\textsf{Amora}\xspace}
\newcommand{\vecpir}{\textsf{VecPIR}\xspace}
\newcommand{\nbrpir}{\textsf{NbrPIR}\xspace}
\newcommand{\rev}[1]{#1}
\begin{document}

\renewcommand{\thetheorem}{\arabic{theorem}}

\title{Amora: Communication-Efficient Private Approximate Nearest Neighbors Search}

\author{Qiuhao Wang}
\orcid{0009-0002-4381-5026}
\affiliation{%
  \institution{Xi'an Jiaotong University}
  \city{Xi'an}
  \country{China}
}
\email{QiuHaoWang@stu.xjtu.edu.cn}

\author{Zuowei Chen}
\orcid{0009-0003-4447-0979}
\affiliation{%
  \institution{Xi'an Jiaotong University}
  \city{Xi'an}
  \country{China}
}
\email{chenzuowei\_xjtu@163.com}

\author{Xu Yang}
\orcid{0000-0003-0094-6245}
\affiliation{%
  \institution{Xi'an Jiaotong University}
  \city{Xi'an}
  \country{China}
}
\email{yangxu@stu.xjtu.edu.cn}

\author{Saiyu Qi}
\authornote{Corresponding Author}
\orcid{0000-0002-0394-4432}
\affiliation{%
  \institution{Xi'an Jiaotong University}
  \city{Xi'an}
  \country{China}
}
\email{saiyu-qi@mail.xjtu.edu.cn}

\author{Yong Qi}
\orcid{0000-0002-7682-5653}
\affiliation{%
  \institution{Xi'an Jiaotong University}
  \city{Xi'an}
  \country{China}
}
\email{qiy@mail.xjtu.edu.cn}

\begin{abstract}
Approximate nearest neighbors search (ANNS) over cloud-hosted vector databases raises a critical privacy concern: the query vector, which often encodes sensitive user intent, is exposed to the server. Private ANNS (PANNS) allows a client to retrieve nearest neighbors without revealing its query. The state-of-the-art single-server scheme, Pacmann (ICLR'25), achieves high accuracy via graph-based search with private information retrieval (PIR), but incurs substantial communication overhead because each PIR query retrieves only a single vertex. We present \sysname, a communication-efficient PANNS system that applies SimplePIR to a partitioned graph index so that one PIR query is amortized over many vertices. To retrieve as much of the distance and neighbor information the search needs under a limited communication budget, \sysname adopts a \emph{dual-PIR} split layout with two heterogeneous PIR matrices that stay balanced in dimension: \vecpir retrieves inner products for an entire partition of vertices as their distance at once, while \nbrpir retrieves necessary neighbor lists at subgroup granularity for graph traversal. This layout shifts the cost model of PIR query from per-vertex to per-partition, which significantly reduces the communication overhead. To further improve efficiency by reducing the number of scans and query rounds in the PIR database, \sysname introduces \emph{batch PIR} to pack multiple queries into one matrix multiplication, \emph{centroid-based pruning} to filter unpromising partitions, and \emph{TopCand exploration} that seeds jump-style graph navigation with the closest vertices from each retrieved partition. We evaluate \sysname on \rev{five} datasets spanning 96 to \rev{1024} dimensions and up to 10 million vectors. Compared to Pacmann, \sysname reduces online communication by \rev{93--99\%} while achieving higher search accuracy.
\end{abstract}

\begin{CCSXML}
<ccs2012>
   <concept>
       <concept_id>10002978.10002991.10002995</concept_id>
       <concept_desc>Security and privacy~Privacy-preserving protocols</concept_desc>
       <concept_significance>500</concept_significance>
       </concept>
   <concept>
       <concept_id>10002951.10003227.10003351.10003445</concept_id>
       <concept_desc>Information systems~Nearest-neighbor search</concept_desc>
       <concept_significance>500</concept_significance>
       </concept>
   <concept>
       <concept_id>10002951.10003317.10003325</concept_id>
       <concept_desc>Information systems~Information retrieval query processing</concept_desc>
       <concept_significance>300</concept_significance>
       </concept>
</ccs2012>
\end{CCSXML}

\ccsdesc[500]{Security and privacy~Privacy-preserving protocols}
\ccsdesc[500]{Information systems~Nearest-neighbor search}
\ccsdesc[300]{Information systems~Information retrieval query processing}

\keywords{Vector database; Private information retrieval}

\maketitle
\enlargethispage{-\baselineskip}  

\section{Introduction}
\label{sec:intro}
  
Vector similarity search has become a foundational operation in modern intelligent systems, underpinning recommendation engines~\cite{DBLP:conf/kdd/GrbovicC18, DBLP:journals/tbd/DouzeGDJSMLHJ26}, semantic search~\cite{DBLP:journals/corr/abs-1301-3781, DBLP:conf/emnlp/PenningtonSM14}, and retrieval-augmented generation (RAG)~\cite{DBLP:conf/nips/LewisPPPKGKLYR020, DBLP:conf/emnlp/KarpukhinOMLWEC20} for large language models. At its core lies the $k$-nearest neighbor search ($k$-NNS) problem: given a high-dimensional database and a query, find the $k$ vectors in the database closest to the query. Exact $k$-NNS is prohibitively expensive at scale, so practical systems rely on approximate nearest neighbors search (ANNS), which trades a small loss in accuracy for speedups via graph-based indices~\cite{DBLP:journals/is/MalkovPLK14, DBLP:journals/pami/MalkovY20, DBLP:journals/corr/abs-1810-07355, DBLP:journals/pvldb/FuXWC19, DBLP:conf/nips/SubramanyaDSKK19}, hash-based~\cite{DBLP:conf/stoc/IndykM98, DBLP:conf/vldb/GionisIM99}, or cluster-based partitioning~\cite{DBLP:journals/pami/JegouDS11, DBLP:conf/nips/ChenZWLLLYW21}. As these ANNS systems increasingly migrate to cloud-hosted vector databases~\cite{DBLP:journals/tbd/JohnsonDJ21, DBLP:conf/sigmod/WangYGJXLWGLXYY21}, a critical privacy concern arises: the search query, which often encodes sensitive user intent such as medical symptoms, legal questions, or personal preferences, is transmitted to the server, exposing what the client is searching for. Private ANNS (PANNS) addresses this by allowing a client to retrieve the $k$ nearest neighbors of a query vector without revealing the query to the server.

One line of work builds PANNS under a multi-server setting~\cite{DBLP:conf/sp/Servan-Schreiber22, DBLP:conf/esorics/SongLCPW23}, where the database is replicated across servers assumed not to collude. However, since the non-collusion assumption is difficult to enforce in practice, clients may trust such schemes less than single-server ones. Recent single-server schemes avoid this assumption and fall into two families: cluster-based and graph-based schemes. Cluster-based schemes SANNS~\cite{DBLP:conf/uss/0030CDPRR20}, Panther~\cite{DBLP:conf/ccs/LiHZ0L0C25}, and Tiptoe~\cite{DBLP:conf/sosp/HenzingerDCZ23} group similar vectors into clusters and use oblivious RAM (ORAM)~\cite{DBLP:journals/jacm/GoldreichO96, DBLP:journals/jacm/StefanovDSCFRYD18} or private information retrieval (PIR)~\cite{DBLP:journals/pacmmod/ZhangPSLLWLGZ25, DBLP:conf/crypto/LiMRS24, DBLP:conf/icde/DongWWZC25} to privately search the clusters whose centroids are closest to the query. \rev{However, a centroid summarizes an entire cluster, and thus can only coarsely guide nearest neighbor searches. A plaintext ANNS system compensates by searching more clusters at little extra cost, but in the private setting each additional cluster costs one expensive PIR query. As a result, coarse guidance and a limited communication budget together leave some true neighbors in unvisited clusters, fundamentally capping accuracy.}

Graph-based schemes Pacmann~\cite{DBLP:conf/iclr/ZhouSF25} and Compass~\cite{DBLP:conf/osdi/ZhuPZP25} instead organize the vectors into a proximity graph such as HNSW~\cite{DBLP:journals/pami/MalkovY20}, in which each vertex stores one vector and its neighbor list. The search maintains a candidate set and expands the closest candidate with PIR or ORAM at each hop to protect the query privacy. \rev{The edges provide finer and more precise guidance than centroid distances, so reaching the same accuracy requires visiting far fewer vertices.} Pacmann is the state of the art for query privacy over a public database and achieves high accuracy, but it incurs substantial communication overhead in both search and preprocessing, making it impractical for bandwidth-constrained clients. This stems from Pacmann's underlying PIR scheme, Piano~\cite{DBLP:conf/sp/ZhouPZS24}, whose preprocessing requires downloading the entire database. More critically, Pacmann exhibits low information density in its online search: each PIR query incurs $O(\sqrt{N})$ communication yet retrieves only a single vertex's vector and neighbor list. We refer to this as the \emph{per-vertex} cost model, and it is the fundamental source of Pacmann's inefficiency.

We observe that SimplePIR~\cite{DBLP:conf/uss/HenzingerHCMV23} offers an inherent information-density advantage. It organizes the database as a matrix in which one query retrieves an entire column, and attains the optimal communication on a balanced matrix. Thus, one target vertex costs the same $O(\sqrt{N})$ communication, and every other vertex in that column comes for free. \rev{A direct application of SimplePIR partitions the graph and stores each partition's vertices in one column, so a single PIR query returns a group of closely connected vertices. However, this naive method fails to realize the advantage. Since each vertex carries its neighbor list together with its vector for graph navigation, the two payloads make a vertex heavy and fill many entries of a single column. Such heavy payloads create a dilemma in designing the PIR matrix layout. One option is to adopt a balanced PIR matrix for the optimal communication cost. However, the heavy payloads then allow only a small number of vertices per column, so each query returns little information, and the search needs far more queries and rounds to reach high accuracy. The other option stores more vertices per column to raise the information per PIR query, but the matrix then becomes unbalanced, and every query costs more.} The first challenge is therefore: \textbf{C1.}~how to organize the PIR database so that a single PIR query carries as much useful vertex information as possible for the nearest neighbor search while keeping the optimal communication cost. Moreover, graph-based ANNS requires multi-hop traversal, and each hop invokes PIR to protect query privacy. Although SimplePIR is built on lightweight Learning with Errors (LWE)~\cite{DBLP:journals/jacm/Regev09}, each query must scan the entire database, so it incurs $O(N)$ computation and makes memory bandwidth a bottleneck. The second challenge is therefore: \textbf{C2.}~how to reduce the number of scans and query rounds in the PIR database to improve search efficiency. 

To address both challenges, we build \sysname, a communication-efficient PANNS scheme with high accuracy. \rev{For \textbf{C1}, our insight is that a vertex's two payloads serve the search differently: its distance can improve the candidate set, whereas its neighbor list matters only if the search expands it. \sysname therefore adopts a \emph{dual-PIR} split layout that organizes the two types of payloads into two separate PIR databases with vertices grouped by graph partition: (1) \vecpir spreads each vertex's vector horizontally across columns, so that each partition holds far more vectors and one PIR query computes inner products with an entire partition to return its distance to every vertex in it; (2) \nbrpir stacks neighbor lists vertically and cuts each partition into subgroups to return the neighbor lists of only the candidates that the search expands. As a result, both PIR matrices stay balanced, and the search alternates between them. A round of PIR queries now amortizes its communication over a whole partition, which shifts the cost model from per-vertex to per-partition.} As for \textbf{C2}, we introduce three optimization techniques that reduce the PIR queries needed to reach a target accuracy: (1) \emph{Batch PIR} merges concurrent PIR queries into a single matrix multiplication, reducing several communication rounds to one and amortizing SimplePIR's linear database scan across queries; (2) \emph{Centroid-based pruning} ranks pending partitions by the plaintext distance from the query to their centroids and processes only the most promising ones, avoiding PIR queries to distant regions; (3) \emph{TopCand exploration} extracts the closest vertices (top candidates) from each retrieved partition as seeds in the candidate set, enabling jump-style navigation into distant regions rather than strictly following edges. 

In general, our contributions are as follows:

\begin{itemize}[leftmargin=*,nosep]
    \item We design a dual-PIR split layout that combines SimplePIR with graph partitioning, storing vectors horizontally and neighbor lists in subgroup columns. Both PIR matrices stay balanced, so every PIR query keeps SimplePIR's optimal communication, while a round of PIR queries amortizes its cost over a whole partition, shifting the cost model from per-vertex to per-partition.
    \item We propose three optimization techniques to reduce rounds and database scans: batch PIR merges concurrent queries into a single matrix multiplication, centroid-based pruning ignores partitions with distant centroids, and \emph{TopCand} exploration extracts seeds from each retrieved partition for jump-style graph navigation.
    \item We open-source \sysname\footnote{\url{https://github.com/meander-wqh/Amora/}.}, prove query privacy, analyze communication and computation complexity, and evaluate it on \rev{five} datasets spanning 96--\rev{1024} dimensions and up to 10M vectors. Compared to Pacmann, \sysname reduces online communication by \rev{93--99\%} while achieving higher search accuracy.
\end{itemize}

\section{Related Work}
\label{sec:related}
\subsection{Approximate Nearest Neighbors Search}

Approximate nearest neighbors search (ANNS) algorithms can be broadly divided into cluster-based, hash-based and graph-based categories. Cluster-based methods such as IVF~\cite{DBLP:journals/pami/JegouDS11}, ScaNN~\cite{DBLP:conf/icml/GuoSLGSCK20}, and SPANN~\cite{DBLP:conf/nips/ChenZWLLLYW21} partition the vector space and restrict the search to the most promising partitions. Their fixed access pattern is easier to protect with PIR, but restricting search to a few clusters limits accuracy when the true neighbors span multiple partitions. Hash-based methods~\cite{DBLP:conf/stoc/IndykM98, DBLP:conf/vldb/GionisIM99} such as Locality Sensitive Hashing (LSH)~\cite{DBLP:conf/compgeom/DatarIIM04} map vectors into hash buckets so that nearby vectors are likely to collide, offering sublinear query time with theoretical guarantees but requiring many hash tables to achieve high recall in practice. Graph-based methods build a navigable proximity graph and answer queries via greedy traversal, exploiting the small-world property to reach nearest neighbors in logarithmic hops; representative schemes include NSW~\cite{DBLP:journals/is/MalkovPLK14}, HNSW~\cite{DBLP:journals/pami/MalkovY20}, NGT~\cite{DBLP:journals/corr/abs-1810-07355}, NSG~\cite{DBLP:journals/pvldb/FuXWC19}, and DiskANN~\cite{DBLP:conf/nips/SubramanyaDSKK19}. Graph-based methods currently achieve the best accuracy--throughput trade-off on most benchmarks, but their iterative, data-dependent traversal pattern makes privacy protection challenging, as each hop reveals which vertex is being accessed.

\subsection{Private Approximate Nearest Neighbors Search}

Some PANNS works adopt a \emph{multi-server} setting. Servan-Schreiber et al.~\cite{DBLP:conf/sp/Servan-Schreiber22} combine LSH with distributed point functions~\cite{DBLP:conf/ccs/BoyleGI16} across two non-colluding servers, and Song et al.~\cite{DBLP:conf/esorics/SongLCPW23} further reduce the required subsets. However, the non-collusion assumption is difficult to enforce in practice, motivating \emph{single-server} approaches.

Among single-server schemes, cluster-based methods appear first. SANNS~\cite{DBLP:conf/uss/0030CDPRR20} combines ORAM~\cite{DBLP:journals/jacm/GoldreichO96} and garbled circuits~\cite{DBLP:conf/focs/Yao86} but incurs several gigabytes of communication per query. Panther~\cite{DBLP:conf/ccs/LiHZ0L0C25} replaces ORAM with a co-design of batch PIR~\cite{DBLP:conf/sp/AngelCLS18}, secret sharing~\cite{DBLP:journals/cacm/Shamir79}, and homomorphic encryption~\cite{DBLP:conf/stoc/Gentry09}, reducing communication while protecting both query and database privacy. Tiptoe~\cite{DBLP:conf/sosp/HenzingerDCZ23} focuses on query privacy only and uses somewhat homomorphic encryption~\cite{DBLP:journals/toct/BrakerskiGV14} to privately compute distances within $k$-means clusters in a single round. Wally~\cite{DBLP:journals/corr/abs-2406-06761} improves Tiptoe by batching anonymous queries and relaxing privacy to differential privacy~\cite{DBLP:conf/tcc/DworkMNS06}. However, all cluster-based schemes restrict the search to a small number of partitions and lack iterative refinement, fundamentally limiting their accuracy.

To overcome this accuracy barrier, graph-based schemes leverage multi-hop traversal. Pacmann~\cite{DBLP:conf/iclr/ZhouSF25} runs graph-based search using Piano PIR~\cite{DBLP:conf/sp/ZhouPZS24} to privately retrieve per-vertex information at each hop, achieving significantly higher accuracy than cluster-based methods. Compass~\cite{DBLP:conf/osdi/ZhuPZP25} combines HNSW with Ring ORAM~\cite{DBLP:journals/jacm/StefanovDSCFRYD18, DBLP:conf/uss/RenFKSSDD15} under a setting where the client owns the encrypted database, matching plaintext accuracy. However, both of them require significant client-side storage and communication overhead. PrivANN~\cite{DBLP:conf/trustcom/ChenCLSL25} uses a combination of ORAM and a Trusted Execution Environment (TEE)~\cite{DBLP:journals/iacr/CostanD16} to provide high-accuracy and low communication PANNS, but its security assumptions rely on a TEE, which introduces additional side-channel attacks~\cite{DBLP:conf/dac/MishraABSM25, DBLP:conf/acsac/Lang0MLWL22, DBLP:conf/date/KouSHZ23}.

\section{Background}
\label{sec:background}

\subsection{Private Approximate Nearest Neighbors Search}

\subsubsection{Approximate Nearest Neighbors Search}
Given a $d$-dimensional metric space with a distance function $f_\Delta(\cdot,\cdot)$ and a database $\mathsf{DB} = \{\vec{v}_1, \vec{v}_2, \ldots, \vec{v}_N\} \subset \mathbb{R}^d$ containing $N$ vectors, the ANNS problem is: given a query vector $\vec{v}_q \in \mathbb{R}^d$, find an index set $I = \{i_1, \ldots, i_k\}$ such that the corresponding vectors $\{\vec{v}_{i_1}, \ldots, \vec{v}_{i_k}\}$ approximate the true $k$ nearest neighbors of $\vec{v}_q$ in $\mathsf{DB}$ under $f_\Delta$. The quality of the approximation is measured by Recall (the fraction of true nearest neighbors returned) or mean reciprocal rank (MRR).

\subsubsection{Graph-Based ANNS}
Graph-based ANNS builds a proximity graph $G = (V, E)$ over the database, where each vertex $v \in V$ corresponds to a database vector and each directed edge $(v, u) \in E$ connects $v$ to one of its approximate nearest neighbors. Every vertex maintains a neighbor list of at most $M_0$ outgoing edges. Given a query $\vec{q}$ (ANNS for vector $\vec{v}_q$), search proceeds by greedy traversal: starting from an entry vertex, the algorithm maintains a candidate heap $\mathcal{W}$ (a min-heap ordered by distance to $\vec{v}_q$) and a result set $\mathcal{R}$ (max-heap of capacity $\mathit{ef} \geq k$). At each step, the closest unvisited vertex is popped from $\mathcal{W}$, its neighbors are retrieved, their distances to $\vec{v}_q$ are computed, and those closer than the current worst result in $\mathcal{R}$ are inserted into both $\mathcal{W}$ and $\mathcal{R}$. The search terminates when $\mathcal{W}$ is empty, and the top-$k$ entries of $\mathcal{R}$ are returned. 

\subsubsection{Single-Server PANNS}
In many applications the query vector encodes sensitive user intent and should not be revealed to the server hosting the database. The privacy requirement for a PANNS protocol focused on in this paper is query privacy only, the server learns negligible information about the query $\vec{q}$, and the database is public and does not require protection. Following prior work~\cite{DBLP:conf/iclr/ZhouSF25, DBLP:conf/sosp/HenzingerDCZ23}, we consider the \emph{single-server preprocessing} model, in which the private protocol proceeds in two phases:

\begin{enumerate}[leftmargin=*,nosep]
    \item \textbf{Preprocessing.} The server holds the public database $\mathsf{DB} \in \mathbb{R}^{d \times N}$. The client and server interact to generate a compact hint that the client stores locally for use during subsequent queries.
    \item \textbf{Query.} The client issues a query $\vec{q}$. The client and server engage in one or more rounds of communication, after which the client outputs the index set $I$ of $k$-approximate nearest neighbors.
\end{enumerate}

\subsection{Private Information Retrieval}

\subsubsection{LWE and Regev Encryption}
SimplePIR's security relies on the Learning with Errors (LWE) assumption~\cite{DBLP:journals/jacm/Regev09}. Parameterized by a secret dimension $n$, a modulus $q \geq 2$, and an error distribution $\chi$ over $\mathbb{Z}$, LWE asserts that for a random matrix $A \stackrel{\$}{\leftarrow} \mathbb{Z}_q^{m \times n}$, a secret $s \stackrel{\$}{\leftarrow} \mathbb{Z}_q^{n}$, and an error vector $e \stackrel{\$}{\leftarrow} \chi^{m}$, the pair $(A,\; As + e)$ is computationally indistinguishable from $(A,\; r)$ where $r \stackrel{\$}{\leftarrow} \mathbb{Z}_q^{m}$. Based on LWE, Regev~\cite{DBLP:journals/jacm/Regev09} constructs a secret-key encryption scheme with an additional plaintext modulus $p$. The secret key is $s \stackrel{\$}{\leftarrow} \mathbb{Z}_q^{n}$, and the encryption of a message $\mu \in \mathbb{Z}_p$ is:
$(a,\; c) = (a,\; a^\mathsf{T} s + e + \lfloor q/p \rfloor \cdot \mu) \in \mathbb{Z}_q^{n} \times \mathbb{Z}_q,$
where $a \stackrel{\$}{\leftarrow} \mathbb{Z}_q^{n}$ and $e \stackrel{\$}{\leftarrow} \chi$. Decryption computes $c - a^\mathsf{T} s \bmod q$ and rounds to the nearest multiple of $\lfloor q/p \rfloor$. A key property is additive homomorphism: the component-wise sum of two ciphertexts decrypts to the sum of the plaintexts, provided the accumulated error remains small.

\subsubsection{SimplePIR}
Private information retrieval (PIR)~\cite{DBLP:journals/jacm/ChorKGS98, DBLP:conf/focs/KushilevitzO97, DBLP:conf/ndss/SongXZZLMX26, DBLP:conf/uss/HenzingerHCMV23, DBLP:journals/pacmmod/ZhangPSLLWLGZ25, DBLP:conf/uss/Menon024, DBLP:conf/crypto/LiMRS24, DBLP:conf/icde/DongWWZC25} allows a client to fetch record $i$ from a server's database of $N$ records without revealing $i$ to the server. SimplePIR~\cite{DBLP:conf/uss/HenzingerHCMV23} is a single-server PIR scheme that achieves high throughput by leveraging Regev encryption and server-side preprocessing. SimplePIR organizes the $N$-element database as a $\sqrt{N} \times \sqrt{N}$ balanced matrix $\mathsf{db} \in \mathbb{Z}_p^{\sqrt{N} \times \sqrt{N}}$, and defines the scalar $\Delta := \lfloor q/p \rfloor$. Its protocol consists of four interfaces:

\begin{itemize}[leftmargin=*,nosep]
    \item $\mathsf{hint}_c \leftarrow \mathsf{Setup}(\mathsf{db})$: The server computes $\mathsf{hint}_c \leftarrow \mathsf{db} \cdot A$ where $A \in \mathbb{Z}_q^{\sqrt{N} \times n}$ is a shared LWE matrix. This is a one-time offline phase; $\mathsf{hint}_c$ is sent to the client and reused across all queries.
    \item $(\mathsf{st}, qu) \leftarrow \mathsf{Query}(i)$: The client first decomposes index $i$ into $(i_\text{row}, i_\text{col})$, samples $s \stackrel{\$}{\leftarrow} \mathbb{Z}_q^n$ and $e \stackrel{\$}{\leftarrow} \chi^{\sqrt{N}}$, and computes $qu \leftarrow As + e + \Delta \cdot u_{i_\text{col}} \in \mathbb{Z}_q^{\sqrt{N}}$, where $u_{i_\text{col}}$ is the unit vector with a ``1'' at index $i_\text{col}$. The client stores $\mathsf{st} = (i_\text{row}, s)$ and sends $qu$ to the server.
    \item $ans \leftarrow \mathsf{Answer}(\mathsf{db}, qu)$: The server computes $ans \leftarrow \mathsf{db} \cdot qu \in \mathbb{Z}_q^{\sqrt{N}}$ and returns it to the client.
    \item $d \leftarrow \mathsf{Recover}(\mathsf{st}, \mathsf{hint}_c, ans)$: The client first computes $\hat{d} \leftarrow ans[i_\text{row}] - \mathsf{hint}_c[i_\text{row},:] \cdot s \in \mathbb{Z}_q$, and it recovers the plaintext $d = \mathrm{Round}_\Delta(\hat{d})/\Delta \in \mathbb{Z}_p$.
\end{itemize}

By the additive homomorphism of Regev encryption, $ans$ is an element-wise encryption of the $i_\text{col}$-th column of $\mathsf{db}$. Therefore, the client can run $\mathsf{Recover}$ for every row index to reconstruct the entire column from a single query.


\subsection{Threat Model}
\label{sec:threat}

\sysname targets the same setting as Pacmann~\cite{DBLP:conf/iclr/ZhouSF25}: a single server $\mathcal{S}$ holds a public vector database $\mathsf{DB} \in \mathbb{R}^{d \times N}$ of $N$ vectors each with $d$ dimensions, and a stateful client $\mathcal{C}$ wishes to perform ANNS without revealing the query vector to the server. The database is public and does not require protection, i.e., we focus exclusively on \emph{query privacy}. We assume the adversary is \emph{semi-honest} (honest-but-curious). That is, the adversary follows the protocol specification but may attempt to learn additional information about the query from leakages. The privacy of the protocol requires that the server learns negligible information about the query $\vec{q}$. Formally, we use a simulation-based definition for the privacy of the protocol:

\begin{definition}[Query Privacy]
\label{def:privacy}
An ANNS protocol is private if there exists a simulator $\mathsf{Sim}$ such that for any probabilistic polynomial-time (PPT) adversary $\mathcal{A}$ acting as the server, the views of $\mathcal{A}$ in the following two experiments are computationally indistinguishable w.r.t.\ the security parameter \rev{$\lambda$}\footnote{\rev{$\lambda$ denotes the bit-security of the scheme and is conventionally set to 128: breaking the underlying LWE problem then requires on the order of $2^{128}$ operations, the standard security level of modern cryptographic systems.}}:
\begin{itemize}[leftmargin=*,nosep]
    \item \textbf{Real:} The client interacts with $\mathcal{A}(1^\lambda, \mathsf{DB})$ who acts as the server. The client may issue multiple ANNS queries sequentially and each ANNS query $\vec{q}$ takes a series of PIR searches as input and involves multiple rounds of communication between the client and the server. After observing the transcript of previous queries, $\mathcal{A}$ may adaptively choose the next query $\vec{q}$ for the client.
    \item \textbf{Ideal:} The simulated client $\mathsf{Sim}(1^\lambda, N)$ interacts with $\mathcal{A}(1^\lambda, \mathsf{DB})$. $\mathcal{A}$ still may adaptively choose the next query $\vec{q}$. However, the simulator is invoked with only the knowledge of the database size $N$, and without the information of the chosen query $\vec{q}$.
\end{itemize}
\end{definition}

\rev{We note that client-side timing side channels are out of scope: a server can observe how long the client takes to respond between rounds. If needed, they can be removed by padding the client's per-round work to a fixed budget.}

\begin{figure}[t]
\centering
\includegraphics[width=0.6\columnwidth]{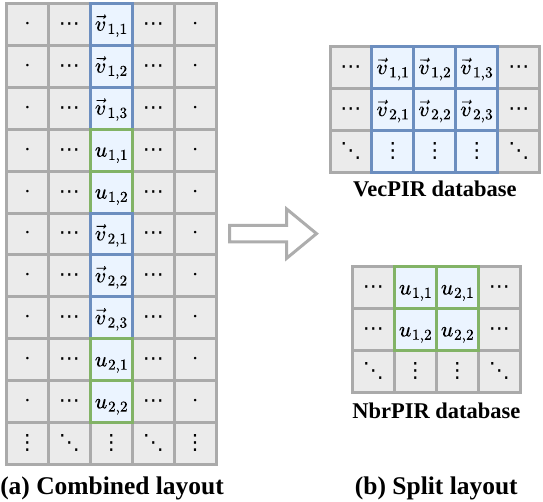}
\caption{\rev{Combined layout vs.\ dual-PIR split layout ($d{=}3$ dimensions, $M_0{=}2$ neighbors). $\vec{v}_{i,j}$ and $u_{i,j}$ are the $j$-th dimension and $j$-th neighbor of vertex $i$.}}
\label{fig:split}
\end{figure}
\section{Design Overview}
\label{sec:design}

\sysname follows the graph-based paradigm: the client keeps a candidate heap and, at each of a fixed number of hops, picks the vertex closest to the query and retrieves its neighbor list from the server via PIR. Each vertex therefore carries two payloads: its vector for distance computation and its neighbor list for traversal. To amortize one PIR query over many vertices, the direct design applies SimplePIR to a graph partitioned by METIS~\cite{DBLP:journals/siamsc/KarypisK98} into $C$ partitions of at most $K$ vertices each, and stores each partition's vertices in a single column of the PIR matrix, forming the combined layout in Figure~\ref{fig:split}(a). Partitioning co-locates closely connected vertices, so one PIR query serves an entire partition. \rev{Moreover, it is only a logical assignment of vertices and removes no edge, so each vertex keeps its full neighbor list and the search can still reach other partitions.} Since a vertex carries both payloads, it fills many entries of a column, which creates a dilemma: (1) Keeping the PIR matrix balanced for the optimal communication cost limits each column to a few vertices, so one PIR query amortizes its cost over too few of them; (2) Packing more vertices per column instead makes the matrix unbalanced, raising the communication of every PIR query.

\begin{figure}[t]
\centering
\includegraphics[width=0.65\columnwidth]{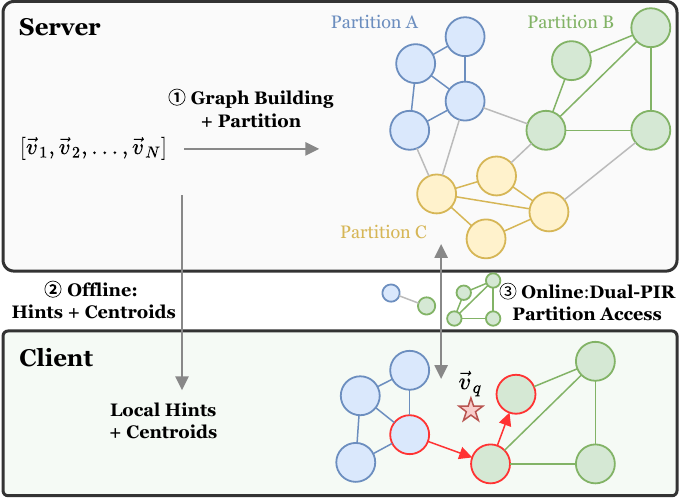}
\caption{High-level overview of \sysname. \ding{172}~The server builds a graph index and partitions it. \ding{173}~Offline: the server sends PIR hints and centroids to the client. \ding{174}~Online: the client issues \vecpir queries for partition-level distances and \nbrpir queries for subgroup-level neighbor lists.}
\label{fig:overview}
\end{figure}

The first challenge \textbf{C1} is therefore: how to organize the PIR database so that a single PIR query carries as much useful vertex information as possible for the nearest neighbor search while keeping the optimal communication cost. \rev{Our key insight is that the search needs the two payloads in different amounts: every retrieved vertex contributes its distance, whereas only the few vertices that the search expands need their neighbor lists. The combined layout ignores this asymmetry by coupling one neighbor list to every vector, so each PIR query spends part of its communication budget on unnecessary neighbor lists. The two payloads therefore call for different arrangements, each retrieved at the granularity the search actually needs. \sysname addresses \textbf{C1} by splitting the two payloads into a \emph{dual-PIR} split layout, as Figure~\ref{fig:split}(b) shows}. The \vecpir matrix stores only vector information: each partition occupies a block of $d$ consecutive columns, and within that block every vertex takes one row, with its $d$-dimensional vector spread horizontally across the $d$ columns. Each column then stores only one dimension per vertex, so a balanced matrix holds far more vectors per partition. One query computes inner products with all $K$ vertices at once, giving their distances to the query. The client then selects candidates from these distances and uses \nbrpir to retrieve only the neighbor lists it needs to expand the candidates. Storing an entire partition's neighbor lists in one column, that is, $K$ vertices with $M_0$ neighbors each, would both exceed this need and make the \nbrpir matrix highly unbalanced. We therefore organize the \nbrpir matrix into fine-grained subgroups of $S$ vertices within each partition, so that each subgroup occupies one column, keeping the matrix balanced. 


The dual-PIR layout improves the information density of PIR queries. The remaining cost is the number of scans and query rounds in the PIR database that the traversal needs, and how to reduce them is the second challenge \textbf{C2}. \sysname introduces three optimization techniques: (1) \emph{Batch PIR} packs multiple independent PIR queries into a single matrix multiplication, reducing multiple communication rounds to one and alleviating SimplePIR's memory-bandwidth bottleneck; (2) \emph{Centroid-based pruning} ranks pending partitions by the plaintext distance from the query to their centroids and discards the distant ones, which incurs zero privacy cost since centroids are public metadata; (3) \emph{TopCand exploration} extracts the $t_c$ closest vertices from each retrieved partition and injects them as seeds into the candidate heap, enabling jump-style graph navigation that leaps into distant regions rather than strictly following edges.

The system operates in two phases. In the offline phase (executed once), the server constructs the graph index, applies graph partitioning, decomposes each partition into subgroups, and builds the \vecpir and \nbrpir databases. The server also computes the partition centroids which will be sent to the client. The server then runs $\mathsf{Setup}$ for each PIR module, computing $\mathsf{hint}$ matrices that the client downloads and stores locally. In the online phase (per query), the client first performs centroid entry: it computes $\vec{v}_q$'s distance to all $C$ partition centroids locally and selects the $K_e$ nearest partitions as entry points, requiring zero communication. \rev{The search then proceeds for $T$ iterations, each consisting of a batched \nbrpir query followed by a batched \vecpir query.} Because the number of iterations and the batch sizes are determined by public system parameters, the communication pattern is identical for every query and leaks no information about the query content. Figure~\ref{fig:overview} illustrates this two-phase workflow.

\begin{table}[t]
\centering
\caption{\rev{Notation summary.}}
\label{tab:notation}
\small
\setlength{\tabcolsep}{5pt}
\rev{\begin{tabular}{@{}l p{0.74\columnwidth}@{}}
\toprule
Notation & Meaning \\
\midrule
$N$ & the number of database vectors \\
$d$ & the vector dimension \\
$M_0$ & the maximum number of neighbors per vertex \\
$C$ & the number of partitions \\
$K, S$ & the maximum sizes of a partition and a subgroup \\
$g_{\max}$ & the maximum number of subgroups per partition \\
$t$ & the number of parts in split encoding \\
$n$ & the LWE dimension \\
$\log q$ & the LWE modulus \\
$\lambda$ & the LWE security level ($\approx$ 128-bit) \\
$k$ & the number of returned nearest neighbors \\
$\mathit{ef}$ & the result-heap capacity ($\ge k$) \\
$B_v, B_n$ & the \vecpir and \nbrpir batch sizes (the numbers of partitions and subgroups queried per iteration) \\
$T$ & the number of search iterations (each issues one batched \nbrpir and one batched \vecpir query) \\
$K_e$ & the number of entry partitions \\
$t_c$ & the number of TopCand seeds taken from each partition \\
$\alpha$ & the pruning ratio \\
\bottomrule
\end{tabular}}
\end{table}

\section{System Design}
\label{sec:system}

\rev{In this section, we elaborate on the detailed system design of \sysname, and summarize the notation used throughout the paper in Table~\ref{tab:notation}.}
\subsection{\vecpir: Distance Computation}
\label{sec:vecpir}

In SimplePIR, querying column $i$ returns all elements in that column. If each vertex's vector is stored as a single column element, one query would retrieve only a few vertices' vectors. The key design of \vecpir is horizontal arrangement (Figure~\ref{fig:dual-pir} (a)): each vertex's $d$-dimensional quantized vector is spread across $d$ consecutive columns, with vertices in the same partition stacked as rows. Concretely, the \vecpir database $\mathsf{db}_{\text{vec}}$ is a $K \times (C \cdot d)$ matrix: partition $c$ occupies column block $[c \cdot d,\; (c+1) \cdot d)$, and entry $(x,\; c \cdot d + j)$ stores the $j$-th dimension of the $x$-th vertex's quantized vector in partition $c$. This layout transforms a column retrieval into an inner-product computation. Let $\vec{v}_q \in \mathbb{Z}_p^d$ be the quantized query vector, $A \in \mathbb{Z}_q^{M \times n}$ the shared random matrix with $M = C \cdot d$, $s \in \mathbb{Z}_q^n$ a fresh secret, and $e \in \chi^M$ a noise vector. To query partition $c$, the client constructs:
\begin{equation}
\label{eq:embquery}
qu = As + e + \Delta \cdot q_c \in \mathbb{Z}_q^M
\end{equation}
where $q_c \in \mathbb{Z}^M$ places $\vec{v}_q$ at positions $[c \cdot d,\; (c+1) \cdot d)$ and is zero elsewhere, and $\Delta = \lfloor q / p \rfloor$ is the scaling factor. The server computes $\mathsf{ans} = \mathsf{db}_{\text{vec}} \cdot qu \in \mathbb{Z}_q^{K}$ and returns it to the client.

For row $i$, the answer contains $\langle \vec{v}_q,\; \mathsf{db}_{\text{vec}}[i, c \cdot d : (c+1) \cdot d] \rangle + \text{noise}$, the noisy inner product between the query and the $i$-th vertex. The client recovers the inner product:
\begin{equation}
\hat{d}_i = \left\lfloor \frac{\mathsf{ans}[i] - \mathsf{hint}_v[i] \cdot s}{\Delta} \right\rceil
\end{equation}
where $\mathsf{hint}_v = \mathsf{db}_{\text{vec}} \cdot A$ is the precomputed hint matrix, and $\lfloor \cdot \rceil$ denotes rounding to the nearest integer. A single query thus yields inner products with all $K$ vertices in the target partition. The plaintext modulus $p$ is chosen to retain sufficient precision for ranking.

\subsection{\nbrpir: Neighbor Retrieval}
\label{sec:nbrpir}

\subsubsection{Split encoding}
Each neighbor is identified by its partition index $c \in \{0,\ldots,C-1\}$ and its local index $\ell \in \{0,\ldots,K-1\}$ within that partition. Let $b_\ell = \lceil \log_2 K \rceil$ and $b_c = \lceil \log_2 C \rceil$ denote the number of bits required for each field. The two indices are combined into a single non-negative integer $id = c \cdot 2^{b_\ell} + \ell$, which occupies $b_c + b_\ell$ bits in total. Because a single SimplePIR entry can hold at most $b$ bits (the per-entry bit width), the integer $id$ is split into $t = \lceil (b_c + b_\ell) / b \rceil$ parts: denoting the $i$-th part ($i = 0,\ldots,t-1$) as $id_i$, each part is stored in a separate matrix row and recovered independently. The plaintext modulus for \nbrpir is $p_n = 2^b$, far smaller than \vecpir's, so its scaling factor is correspondingly larger and \nbrpir has ample noise margin. 

\subsubsection{Subgroup organization}
Each vertex has at most $M_0$ neighbors. If an entire partition's neighbor lists are stored in one column, the column height would be $K \cdot M_0 \cdot t$ while the number of columns would be only $C$, producing an extremely unbalanced matrix that prevents SimplePIR from balancing upload and download costs. We resolve this by subdividing each partition into subgroups of at most $S$ vertices (Figure~\ref{fig:dual-pir} (b)). Each subgroup occupies one column, yielding at most $g_{\max} = \lceil K / S \rceil$ subgroup columns per partition. In this case, the \nbrpir database $\mathsf{db}_{\mathsf{nbr}}$ has $S \cdot M_0 \cdot t$ rows and $C \cdot g_{\max}$ columns, a much more balanced shape. Vertices with fewer than $M_0$ neighbors and subgroups with fewer than $S$ vertices are padded to maximums so that every column has exactly $S \cdot M_0 \cdot t$ entries.

\begin{figure*}[t]
\centering
\includegraphics[width=0.9\textwidth]{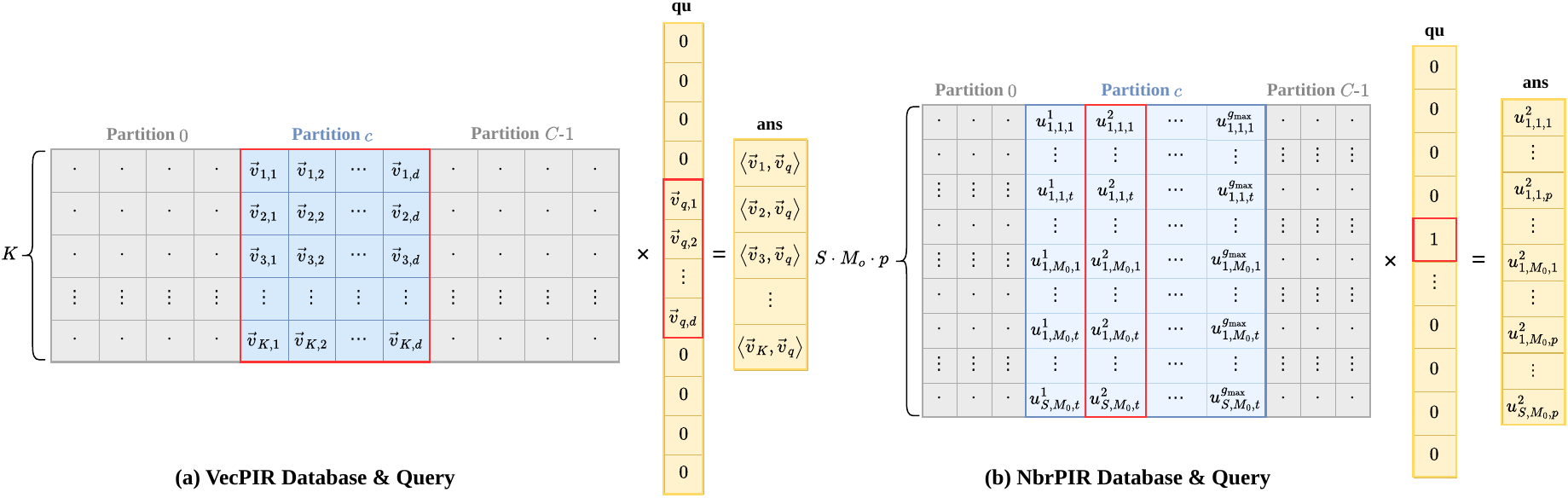}
\caption{Dual-PIR split layout. (a): \vecpir arranges each vertex's $d$-dimensional vector horizontally across $d$ columns within a partition block, querying partition $c$ returns inner products with all $K$ vertices. (b): \nbrpir organizes each partition into $g_{\max}$ subgroup columns, each subgroup stores $S$ vertices' $M_0$ neighbors using split encoding with $t$ parts.}
\label{fig:dual-pir}
\end{figure*}


\subsubsection{Vertex-to-subgroup mapping}
When \vecpir returns distances for partition~$c$, the client selects promising candidates by ranking the $K$ returned inner products. For a candidate at row~$\ell$ (i.e., local index~$\ell$ within partition~$c$), its neighbor list resides in \nbrpir subgroup column
$c \cdot g_{\max} + \lfloor \ell / S \rfloor$,
and its entries start at offset $(\ell \bmod S) \cdot M_0 \cdot t$ within that column. The client issues a SimplePIR column query for this subgroup and extracts the target vertex’s $M_0 \cdot t$ rows to recover its full neighbor list.

\subsection{Efficient Private Search}
\label{sec:opt}

The dual-PIR split layout amortizes the communication of a private search: each \vecpir query returns distances to all $K$ vertices in a partition, and each \nbrpir query retrieves neighbor lists for all $S$ vertices in a subgroup, so one round trip yields far more information than in per-vertex PIR systems. The search proceeds iteratively: pop the closest candidate from the heap, issue one \nbrpir query for its subgroup to decode its neighbors, then issue one \vecpir query for each neighbor's uncached partition to obtain distances. We further improve this basic search with the following techniques.

\subsubsection{Batch PIR}
The PIR database matrices $\mathsf{db}_{\text{vec}}$ and $\mathsf{db}_{\text{nbr}}$ can occupy multiple gigabytes of server memory, and each PIR query requires a full matrix-vector multiplication that scans the entire database. When $B$ PIR queries are issued independently, the server must scan the database $B$ times, making memory bandwidth the dominant performance bottleneck. Batch PIR addresses this by packing multiple queries into a single matrix multiplication. For \vecpir, to query $B_v$ partitions $\{c_1, \ldots, c_{B_v}\}$ simultaneously, the client constructs a $(C \cdot d) \times B_v$ query matrix $Q$ where each column is one partition's query vector as in Eq.~\eqref{eq:embquery}. The server computes $\mathsf{ans} = \mathsf{db}_{\text{vec}} \cdot Q \in \mathbb{Z}_q^{K \times B_v}$ with a single database scan, amortizing the memory-bandwidth cost across all $B_v$ queries while merging $B_v$ round trips into one and resolving distances to $K \cdot B_v$ vertices. The same batching applies to \nbrpir: a $(C \cdot g_{\max}) \times B_n$ query matrix yields an $(S \cdot M_0 \cdot t) \times B_n$ answer matrix containing $B_n$ subgroups' neighbor encodings in one round trip.

\subsubsection{TopCand exploration}
Standard graph search navigates solely by following edges, visiting one vertex at a time. \rev{In \sysname, to protect the query privacy, each \vecpir query privately selects one partition, and the server computes the distances to all vertices inside it within a single encrypted operation. The distances to all $K$ vertices in the partition are thus a byproduct of one privacy-preserving query, providing a complete view of that partition without any additional queries.} We call the $t_c$ closest vertices per partition the \emph{TopCand} (top candidates). These vertices are added to the candidate heap $\mathcal{W}$ (min-heap) as additional navigation seeds, enabling jump-style exploration: the search can leap directly to promising regions of a partition without following graph edges step by step, effectively supplementing neighbor-based navigation with partition-level shortcuts. The remaining $K-t_c$ vertices are not promoted to $\mathcal{W}$ (which would overwhelm subsequent \nbrpir rounds), but their distances, obtained for free from the amortized \vecpir query, are still valuable. Like standard graph search, \sysname maintains $\mathcal{W}$ for exploration and a result heap $\mathcal{R}$ (max-heap, capacity $\mathit{ef}$) for the answer set. The graph neighbors decoded from \nbrpir and the $t_c$ \emph{TopCand} vertices enter both $\mathcal{W}$ and $\mathcal{R}$, while the top-$k$ vertices in the partition enter $\mathcal{R}$. These free distances tighten the admission threshold of $\mathcal{R}$ faster, so that fewer candidates qualify in subsequent rounds and fewer PIR queries are wasted.

\subsubsection{Centroid-based pruning}\label{sec:centroid-based pruning}
During neighbor processing, discovered neighbors may point to many uncached partitions. Without prioritization, all would compete for the limited \vecpir batch budget $B_v$, and distant or irrelevant partitions waste valuable slots. The saved partition centroids can naturally serve as directional indicators. We maintain the pending set $\mathcal{P}$ as an ordered set sorted by ascending centroid distance $-\langle \vec{v}_q, \mu_c \rangle$, where $\mu_c$ is the centroid of partition $c$. Since centroids are public metadata, this sorting is performed entirely on the client without online communication. When selecting partitions for the next \vecpir batch, the sorted order ensures that the top-$B_v$ entries are always the closest partitions to the query, directing the communication budget toward the most promising regions. Before selection, the client prunes $\mathcal{P}$ by removing all partitions whose centroid distance exceeds $\alpha \cdot d^*_{\text{best}}$, where $d^*_{\text{best}}$ is the closest distance seen so far\rev{, initialized from the entry partition and updated whenever a closer centroid appears}. \rev{The pruning ratio $\alpha$ is a fixed per-dataset parameter, set once and not updated during the search. A larger $\alpha$ prunes more strictly, which suits datasets whose centroids lie close together, since it still keeps the useful partitions. When the centroids are spread out, a smaller $\alpha$ retains the potential partitions that may hold true neighbors.} Together, sorting and pruning direct the limited \vecpir budget toward the most relevant partitions.

\subsubsection{Selective decryption and caching}
Each \nbrpir query retrieves an entire subgroup column containing $S$ vertices' neighbor data, but the search typically visits only a few vertices per subgroup. Decrypting all $S \cdot M_0 \cdot t$ entries would waste computation. Instead, the client caches the raw ciphertext and performs selective decryption: only the rows of actually visited vertices are decrypted on demand. When later candidates fall into the same subgroup, the cached ciphertext is reused without additional communication.

\subsubsection{Compressed storage}
Server-side answer computation requires multiplying the (potentially multi-GB) PIR database by each query, making memory bandwidth the dominant cost. Since actual database values have small ranges (\vecpir stores uint8-quantized vectors in $[0, 2^8-1]$ and \nbrpir stores $b$-bit parts in $[0, 2^b-1]$), we keep the databases in a compact format and zero-extend each element to the LWE word size at query time. 

\subsection{\sysname Protocols}
\label{sec:search}

We now describe the \sysname protocols: (1) Offline preprocessing (Algorithm~\ref{alg:offline}) builds the PIR databases; (2) Online search (Algorithm~\ref{alg:search}) answers each query with a fixed communication pattern.

\subsubsection{Offline preprocessing}

\begin{algorithm}[t]
\caption{Offline Preprocessing}
\label{alg:offline}
\small
\KwIn{$N$ vectors of dimension $d$, graph degree $M_0$, number of partitions $C$, subgroup size $S$, part width $b$}
\KwOut{PIR databases $\mathsf{db}_{\text{vec}}, \mathsf{db}_{\text{nbr}}$; hint matrices $H_v, H_n$; centroids $\{\mu_c\}$}
\tcc{Index construction}
Build proximity graph $G$ with $M_0$ neighbors per vertex\;
Partition $G$ into $C$ balanced partitions via graph partitioning\;
\ForEach{partition $c$}{
    Divide into subgroups of at most $S$ vertices\;
}
Quantize all vectors to uint8\;
Compute partition centroids $\{\mu_1, \ldots, \mu_C\}$\;
\tcc{PIR database construction}
Build $\mathsf{db}_{\text{vec}}$: horizontal layout, $K \times (C \cdot d)$ (\S\ref{sec:vecpir})\;
Split-encode all neighbors $id$ into $t$ parts of $b$ bits (\S\ref{sec:nbrpir})\;
Build $\mathsf{db}_{\text{nbr}}$: subgroup-column layout, $(S \cdot M_0 \cdot t) \times (C \cdot g_{\max})$\;
\tcc{PIR setup (one-time distribution to client)}
Generate shared random matrices $A_v, A_n$\;
Compute hint matrices: $H_v \gets \mathsf{db}_{\text{vec}} \cdot A_v$, $H_n \gets \mathsf{db}_{\text{nbr}} \cdot A_n$\;
Send $(A_v, H_v, A_n, H_n, \{\mu_c\}_{c=1}^C)$ to $\mathcal{C}$\;
\end{algorithm}

Algorithm~\ref{alg:offline} summarizes server-side preprocessing, executed once before any queries.
Lines~1--5 construct the search index. The server first builds a proximity graph $G$ (e.g., HNSW) where each vertex has $M_0$ neighbors (Line~1), then partitions $G$ into $C$ balanced partitions using graph partitioning (Line~2). Within each partition, vertices are divided into subgroups of at most $S$ vertices (Lines~3--4). All vectors are quantized to uint8 for compact storage (Line~5), and partition centroids are computed as public metadata for centroid-based entry and pruning (Line~6). Then, the server builds the two PIR databases following the layouts described in Figure~\ref{fig:dual-pir}: $\mathsf{db}_{\text{vec}}$ stores quantized vectors in horizontal layout, and $\mathsf{db}_{\text{nbr}}$ stores split-encoded neighbor lists in subgroup columns (Lines~7--9). After that, the server performs the one-time SimplePIR setup. The server generates shared random matrices $A_v, A_n$ and computes hint matrices $H_v, H_n$ by multiplying each database with its corresponding random matrix. These matrices, together with partition centroids, are sent to the client. This offline distribution is amortized over all subsequent queries; the client uses $A$ and $H$ to construct PIR queries and recover answers without additional server interaction (Lines~10--12).

\subsubsection{Online PANNS}

\begin{algorithm}[t]
\caption{Online PANNS}
\label{alg:search}
\small
\KwIn{query vector $\vec{v}_q$, result size $k$, heap capacity $\mathit{ef}$, entry partitions $K_e$, \nbrpir batch size $B_n$, \vecpir batch size $B_v$, iteration count $T$, \emph{TopCand} count $t_c$}
\KwOut{top-$k$ nearest neighbors}
\tcc{Stage 1: Centroid entry}
Compute $\langle \vec{v}_q, \mu_c \rangle$ for all $c \in [C]$ locally\;
Select top-$K_e$ partitions as entry points\;
$\mathsf{BatchVecPir}$($K_e$)\;
Initialize candidate min-heap $\mathcal{W}$, result max-heap $\mathcal{R}$ (size of $\mathit{ef}$), pending sorted set $\mathcal{P}$\;
Insert top-$t_c$ per entry partition into $\mathcal{W}$, insert top-$k$ into $\mathcal{R}$\;
\tcc{Stage 2: Fixed-iteration search loop}
\For{$\emph{iter} = 1$ \KwTo $T$}{
    \tcc{Phase A: Neighbor retrieval}
    Pop candidates from $\mathcal{W}$ until $B_n$ uncached subgroups collected\;
    Pad to $B_n$ with random uncached subgroups\Comment*[r]{dummy}
    $\mathsf{BatchNbrPIR}$($B_n$) and cache ciphertexts\;
    \tcc{Phase B: Neighbor processing}
    \ForEach{popped candidate $v$}{
        Selectively decrypt $v$’s $M_0$ neighbors from cached ciphertext\;
        \ForEach{neighbor $u$ of $v$}{
            \eIf{$u$’s partition is cached}{
                Read $\text{dist}(u, \vec{v}_q)$ from cache\;
                Update $\mathcal{W}$ and $\mathcal{R}$ if $\text{dist}(u,\vec{v}_q) < \mathcal{R}.\mathsf{top}()$\;
            }{
                Insert $u$’s partition into $\mathcal{P}$\;
            }
        }
    }
    \tcc{Phase C: Distance computation}
    Prune $\mathcal{P}$ by centroid-based partition scheduling (\S\ref{sec:opt})\;
    Pop first $B_v$ entries from $\mathcal{P}$, pad to $B_v$\Comment*[r]{dummy}
    $\mathsf{BatchVecPir}$($B_v$)\;
    \ForEach{newly queried partition $c'$}{
        Insert top-$t_c$ vertices into $\mathcal{W}$ (\emph{TopCand})\;
        Insert top-$k$ vertices into $\mathcal{R}$\;
        \ForEach{neighbor $u$ in $\mathcal{P}$ with $u \in c'$}{
            Read $\text{dist}(u, \vec{v}_q)$ from cache\;
            Insert $u$ into $\mathcal{W}$ and $\mathcal{R}$ if $\text{dist}(u,\vec{v}_q) < \mathcal{R}.\mathsf{top}()$\;
        }
    }
}
\Return top-$k$ from $\mathcal{R}$\;
\end{algorithm}

Algorithm~\ref{alg:search} describes the online search executed by the client for a query vector $\vec{v}_q$. The parameters are: the number of results $k$, result heap capacity $\mathit{ef} \geq k$, number of entry partitions $K_e$, batch sizes $B_n$ and $B_v$ for \nbrpir and \vecpir, iteration count $T$, and \emph{TopCand} count per partition $t_c$.

\textbf{Stage~1: Centroid entry} (Lines~1--5). The client computes inner products between query vector $\vec{v}_q$ and all $C$ partition centroids locally, requiring zero online communication since centroids are distributed during offline setup (Line~1). The $K_e$ partitions with the smallest distance are selected as entry points (Line~2), and a single $\mathsf{BatchVecPir}$ query retrieves distances to all vertices in these $K_e$ partitions (Line~3). The candidate heap $\mathcal{W}$ and result heap $\mathcal{R}$ are then initialized (Lines~4--5): the top-$t_c$ closest vertices per entry partition are inserted into $\mathcal{W}$ as initial navigation seeds, and the top-$k$ are inserted into $\mathcal{R}$ (Line~5).

\textbf{Stage~2: Fixed-iteration search loop} (Lines~6--26). Each of the $T$ iterations consists of three phases:

\emph{Phase~A: Neighbor retrieval} (Lines~7--9). The client pops candidates from $\mathcal{W}$ in distance order, collecting up to $B_n$ distinct uncached subgroups that these candidates belong to (Line~7). Candidates whose subgroup is already cached are also popped since their neighbors can be decoded immediately. If fewer than $B_n$ uncached subgroups are needed, the remaining slots are padded with random uncached subgroups to maintain a fixed batch size (Line~8). A single $\mathsf{BatchNbrPIR}$ query retrieves all $B_n$ subgroup columns, and the returned ciphertexts are cached for future reuse (Line~9).

\emph{Phase~B: Neighbor processing} (Lines~10--17). For each popped candidate $v$, the client selectively decrypts only $v$'s $M_0$ neighbor entries from the cached subgroup ciphertext (Line~11). For each decoded neighbor $u$: if $u$'s partition has been previously queried via \vecpir, its distance $\text{dist}(u, \vec{v}_q)$ is already available in the distance cache, and $u$ is inserted into $\mathcal{W}$ and $\mathcal{R}$ if competitive (Lines~13--15); otherwise, $u$'s partition is added to $\mathcal{P}$, a cross-iteration priority queue sorted by centroid distance (Line~17).

\emph{Phase~C: Distance computation} (Lines~18--26). The client first prunes $\mathcal{P}$ using centroid-based filtering (\S\ref{sec:opt}), removing partitions whose centroid distance exceeds the pruning threshold (Line~18). The top-$B_v$ closest partitions are selected for querying, padded if fewer than $B_v$ are available (Line~19). A $\mathsf{BatchVecPir}$ query retrieves distances to all vertices in these $B_v$ partitions (Line~20). For each newly queried partition, the top-$t_c$ vertices are inserted into $\mathcal{W}$ as \emph{TopCand} navigation seeds, and the top-$k$ are inserted into $\mathcal{R}$ (Lines~22--23). Additionally, any neighbors from Phase~B that are deferred because their partition was uncached can now be resolved from the distance cache (Lines~24--26).

After $T$ iterations, the client returns the top-$k$ entries from $\mathcal{R}$ (Line~27). Every ANNS executes exactly $1+T$ \vecpir rounds and $T$ \nbrpir rounds, with fixed batch sizes $B_v$ and $B_n$ per round. The dummy padding in Lines~8 and 19 ensures that the server observes an identical communication pattern regardless of query content. Besides, the LWE-encrypted query vectors are computationally indistinguishable from real queries, so the server cannot determine which slots carry real data and which are padding.

\section{Theoretical Analysis}
\label{sec:analysis}

\subsection{Communication Cost}
\label{sec:comm}

\subsubsection{Online communication}
We analyze the per-query online communication cost. As described in \S\ref{sec:search}, each ANNS executes $1 + T$ \vecpir rounds with batch size $B_v$ and $T$ \nbrpir rounds with batch size $B_n$. We denote the element size as $w = \lceil \log q / 8 \rceil$ bytes.

\emph{Total online communication.}
Each \vecpir round involves an upload of the query matrix and a download of the answer matrix. The $\mathsf{db}_{\text{vec}}$ has $K$ rows and $C \cdot d$ columns, so each round uploads a $(C \cdot d) \times B_v$ query matrix and downloads a $K \times B_v$ answer matrix:
\begin{equation}
\label{eq:comm_emb}
\text{Comm}_{\text{vec}} = (1 + T) \cdot B_v \cdot (\underbrace{C \cdot d}_{\text{upload}} + \underbrace{K}_{\text{download}}) \cdot w
\end{equation}

The $\mathsf{db}_{\mathsf{nbr}}$ has $S \cdot M_0 \cdot t$ rows and $C \cdot g_{\max}$ columns, where $g_{\max} = \lceil K / S \rceil$. Each round uploads a $(C \cdot g_{\max}) \times B_n$ query matrix and downloads an $(S \cdot M_0 \cdot t) \times B_n$ answer matrix:
\begin{equation}
\label{eq:comm_nbr}
\text{Comm}_{\text{nbr}} = T \cdot B_n \cdot (\underbrace{C \cdot \lceil K / S \rceil}_{\text{upload}} + \underbrace{S \cdot M_0 \cdot t}_{\text{download}}) \cdot w
\end{equation}

The total per-query online communication is $\text{Comm}_{\text{online}} = \text{Comm}_{\text{vec}} + \text{Comm}_{\text{nbr}}$.

\emph{Optimal parameter selection.}
The total communication depends on the number of partitions $C$ and the subgroup size $S$, which can be chosen to minimize communication. For \vecpir, the upload cost is proportional to $C \cdot d$ and the download cost is proportional to $K \approx N / C$. Setting the derivative of $C \cdot d + N / C$ with respect to $C$ to zero yields the optimal number of partitions:
\begin{equation}
\label{eq:optC}
C^* = \sqrt{N / d}
\end{equation}
For \nbrpir, the upload cost is proportional to $C \cdot \lceil K / S \rceil \approx N / S$ and the download cost is proportional to $S \cdot M_0 \cdot t$. Setting the derivative of $N / S + S \cdot M_0 \cdot t$ with respect to $S$ to zero yields the optimal subgroup size:
\begin{equation}
\label{eq:optS}
S^* = \sqrt{N / (M_0 \cdot t)}
\end{equation}

\emph{Amortization benefit.}
PIR schemes such as SimplePIR and Piano achieve per-query communication $O(\sqrt{N})$ for a database of $N$ entries. The critical difference between \sysname and per-vertex PIR systems lies in the number of queries. Pacmann issues one PIR query per accessed vertex, so visiting $n_a$ vertices incurs total communication $O(n_a\sqrt{N})$. In \sysname, each \vecpir query retrieves distances to all $K$ vertices in a partition at once, so visiting vertices spread across $c$ distinct partitions requires only $c$ queries with total communication $O(c\sqrt{N})$. Since a typical search accesses multiple vertices within the same partition, $c \ll n_a$, yielding a proportional reduction in communication. \rev{The benefit grows with the dimensionality $d$. A per-vertex cost model scheme downloads the full $d$-dimensional vector of every accessed vertex, so it pays $O(d)$ communication per compared vertex. In \sysname, the optimal $C^*=\sqrt{N/d}$ (Eq.~\ref{eq:optC}) makes both the upload $C{\cdot}d$ and the download $K$ in Eq.~\ref{eq:comm_emb} approximately equal to $\sqrt{N{\cdot}d}$, so one \vecpir query spends about $2\sqrt{N{\cdot}d}$ values on the distances to about $\sqrt{N{\cdot}d}$ vertices, and the amortized cost per compared vertex is a constant independent of $d$.}

\subsubsection{Offline communication}
Before any queries, the client downloads hint matrices and shared random matrices from the server. For \vecpir, the hint matrix $H_v = \mathsf{db}_{\text{vec}} \cdot A_v \in \mathbb{Z}_q^{K \times n}$ and the shared random matrix $A_v \in \mathbb{Z}_q^{(C \cdot d) \times n}$. For \nbrpir, the hint matrix $H_n = \mathsf{db}_{\text{nbr}} \cdot A_n \in \mathbb{Z}_q^{(S \cdot M_0 \cdot t) \times n}$ and the
shared random matrix $A_n \in \mathbb{Z}_q^{(C \cdot g_{\max}) \times n}$. The client also receives the $C$ partition centroids. The total offline communication is:
\begin{equation}
\label{eq:comm_offline}
\text{Comm}_{\text{offline}} = \big(K + C \cdot d + S \cdot M_0 \cdot t + C \cdot g_{\max}\big) \cdot n \cdot w + C \cdot d \cdot w_c
\end{equation}
where $n$ is the LWE dimension and $w_c$ is the per-element size of centroids. This one-time cost is amortized over all subsequent queries and also serves as the client-side storage requirement.

\subsection{Computation Cost}
\label{sec:comp}

The dominant server-side cost is the matrix multiplication required to answer each PIR query. For \vecpir, the server computes $\mathsf{db}_{\text{vec}} \cdot Q \in \mathbb{Z}_q^{K \times B_v}$, which requires $K \cdot (C \cdot d) \cdot B_v$ multiplications per round. For \nbrpir, the server computes $\mathsf{db}_{\text{nbr}} \cdot Q \in \mathbb{Z}_q^{(S \cdot M_0 \cdot t) \times B_n}$, requiring $(S \cdot M_0 \cdot t) \cdot (C \cdot g_{\max}) \cdot B_n$ multiplications per round. The total server-side computation per query is:
\begin{equation}
\label{eq:server_comp}
\text{Comp}_\text{svr} = (1+T) \cdot K \cdot C \cdot d \cdot B_v + T \cdot S \cdot M_0 \cdot t \cdot C \cdot g_{\max} \cdot B_n
\end{equation}
The client-side cost is: generating a query vector requires $O(n)$ operations per column, and recovering the answer requires $O(n)$ operations per row via the precomputed hint.

\emph{Comparison with per-vertex cost model.}
\rev{Let $z$ denote the total size of a vertex's vector and neighbor list. \sysname multiplies each query against the entire database, so one round costs $O(N{\cdot}z)$ and returns information for $O(\sqrt{N})$ vertices. The Piano~\cite{DBLP:conf/sp/ZhouPZS24} PIR used by Pacmann instead aggregates the records of $O(\sqrt{N})$ vertices, so a query costs $O(\sqrt{N}{\cdot}z)$ but returns information for only one vertex. Pacmann is therefore faster per round, and the gap grows with $N{\cdot}z$, but it needs more rounds. The amortized per-vertex computation of the two schemes is comparable. On high-dimensional data, the vector dominates $z$, so the gap is governed by $N{\cdot}d$. \sysname's workload is moreover a dense matrix multiplication that maps directly to GPU and SIMD (AVX-512) acceleration~\cite{DBLP:conf/uss/0004HP022} for great improvement.}

\subsection{Security Analysis}
\label{sec:security}

We prove that \sysname satisfies the query privacy definition given in Definition~\ref{def:privacy} under the LWE assumption.

\begin{theorem}
\label{thm:privacy}
Assuming the hardness of $\mathrm{LWE}_{n, q, \chi}$, \sysname satisfies query privacy (Definition~\ref{def:privacy}) in the semi-honest model.
\end{theorem}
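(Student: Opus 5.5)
The proof is a standard simulation argument combined with a hybrid over all PIR query vectors the server ever sees. First I would pin down the adversary's view. In the offline phase, the server computes $(A_v,H_v,A_n,H_n,\{\mu_c\})$ from the public $\mathsf{DB}$ alone; this is independent of any query and the server can regenerate it. In the online phase, for each ANNS query the server receives exactly $1+T$ \vecpir query matrices $Q\in\mathbb{Z}_q^{(C\cdot d)\times B_v}$ and $T$ \nbrpir query matrices in $\mathbb{Z}_q^{(C\cdot g_{\max})\times B_n}$. These arrive in a fixed order, and each column has the form $As+e+\Delta\cdot m$, with a fresh secret $s$, fresh noise $e$, and message $m$ that is either $q_c$ (the quantized query placed in block $c$), a unit vector $u_{j}$ for subgroup $j$, or a dummy padding entry.

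The server's answers are deterministic functions of its own state and the queries it receives, so they add nothing to its view. Because the batch sizes, the number of rounds, and the matrix shapes depend only on public parameters (themselves functions of $N$), the message pattern is fixed.

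The simulator $\mathsf{Sim}(1^\lambda,N)$ does the following. It runs the offline phase honestly; this only requires the matrix dimensions, and the shared matrices $A_v,A_n$ are public randomness. Then, for each adaptively chosen query, it sends $1+T$ matrices of uniformly random elements of $\mathbb{Z}_q^{(C\cdot d)\times B_v}$ and $T$ uniformly random matrices of the \nbrpir shape, in the same alternating order as the real protocol.

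Indistinguishability follows by a hybrid argument. Order all query columns sent across the whole execution, over every adaptive query $\vec q$. Hybrid $H_i$ replaces the first $i$ columns with uniform vectors. The adjacent hybrids $H_{i-1}$ and $H_i$ differ in a single column $As+e+\Delta m$ versus uniform. A reduction $\mathcal{B}$ receives an LWE instance $(A,b)$ with $A$ of the appropriate height. It embeds $A$ as $A_v$ or $A_n$ and computes the hints itself from the public $\mathsf{DB}$. It runs the real client (which knows $\vec q$, supplied adaptively by $\mathcal{A}$) for every other column, using fresh secrets; it does not need the secret $s$ of the challenge column, because the client's recovery step only affects its later choices, which $\mathcal{B}$ can compute in the clear since it knows the plaintext results. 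In the challenge slot it sends $b+\Delta m$. If $b=As+e$, this is $H_{i-1}$; if $b$ is uniform, then $b+\Delta m$ is uniform and this is $H_i$. Since the number of columns is polynomial, $(1+T)B_v+TB_n$ per query times polynomially many queries, the total advantage is negligible.

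The main obstacle is the reuse of the single matrix $A$ across many samples. The standard fix is to treat all columns sharing $A$ as one multi-secret LWE instance $(A, AS+E)$, which is equivalent to LWE via the usual hybrid over secrets, exactly as in the SimplePIR security proof. A second point to argue is that the adaptive, data-dependent choice of messages $m$, which covers both the real partitions and subgroups and the dummy padding, is harmless. This holds because the messages are fixed before the challenge column is sent, and the reduction knows them. It is also why dummy padding must produce the same shapes: otherwise the simulator, which knows only $N$, could not match the transcript length.
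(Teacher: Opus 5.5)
Your proposal is correct and follows the paper's proof. The simulator sends uniformly random query matrices of the fixed public shapes, LWE makes each column $As+e+\Delta m$ indistinguishable from uniform, answers are deterministic functions of the public $\mathsf{DB}$, and dummy padding fixes the transcript shape; your column-by-column hybrid and explicit reduction make rigorous the paper's one-line claim that the columns ``can be simulated independently,'' and you correctly treat the \vecpir message $q_c$, which is not a unit vector.

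One refinement: the \vecpir values the client ranks on can retain LWE noise, so its later choices may depend on the challenge column's error, which $\mathcal{B}$ cannot know, and computing them ``in the clear'' then fails. Ordering the hybrids as a real prefix followed by a uniform suffix fixes this, since $\mathcal{B}$ never has to decrypt the challenge column.
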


\begin{proof}
We construct a simulator $\mathsf{Sim}$ that, given only the database size $N$ and the public system parameters $(C, S, T, B_v, B_n, K_e)$, produces a transcript indistinguishable from a real execution.

\emph{Offline phase.} The shared random matrices $A_v, A_n$ are public and independent of the query. The hint matrices $H_v = \mathsf{db}_{\text{vec}} \cdot A_v$ and $H_n = \mathsf{db}_{\text{nbr}} \cdot A_n$ are deterministic functions of the public database and the public random matrices, so the simulator outputs them identically to the real protocol.

\emph{Online phase.} In each round, the client sends a query matrix $Q$ whose columns are of the form $qu_j = A \cdot s_j + e_j + \Delta \cdot u_{i_j}$, where $s_j$ is a fresh secret, $e_j$ is drawn from the error distribution $\chi$, and $u_{i_j}$ is a unit vector selecting the target column. By the LWE assumption, $(A, A \cdot s_j + e_j)$ is computationally indistinguishable from $(A, r_j)$ where $r_j \stackrel{\$}{\leftarrow} \mathbb{Z}_q^{C \cdot g_{\max}}$ (or $\mathbb{Z}_q^{C \cdot d}$ for \vecpir). Adding the scaled unit vector $\Delta \cdot u_{i_j}$ does not help the adversary distinguish, because $\Delta \cdot u_{i_j}$ is masked by the LWE noise. Therefore, the simulator can replace each query column with a uniformly random vector $r_j \stackrel{\$}{\leftarrow} \mathbb{Z}_q^{C \cdot g_{\max}}$ (or $\mathbb{Z}_q^{C \cdot d}$ for \vecpir), producing a query matrix indistinguishable from the real one. Since a fresh secret $s_j$ is sampled for each query column, different columns (and different rounds) are independently distributed. Thus, each of the $(1 + T) \cdot B_v$ \vecpir query columns and $T \cdot B_n$ \nbrpir query columns can be simulated independently. The server's response $\mathsf{ans} = \mathsf{db} \cdot Q$ is a deterministic function of the public database and the query matrix. Since the simulator can produce an indistinguishable query matrix, the resulting answer is also indistinguishable.

\emph{Fixed communication pattern.} Every query executes exactly $1 + T$ \vecpir rounds and $T$ \nbrpir rounds, with fixed batch sizes $B_v$ and $B_n$ per round (Algorithm~\ref{alg:search}). When fewer than $B_n$ real subgroups are needed, the client pads with random uncached subgroups. Similarly, when fewer than $B_v$ real partitions are pending, it pads with already-cached partition IDs. Since all query columns are LWE samples indistinguishable from uniform, the server cannot distinguish real query columns from dummy padding columns. The server therefore observes the same number of rounds, the same matrix dimensions, and identically distributed query matrices for every query, regardless of the query content.

Combining the above, the simulated transcript is computationally indistinguishable from the real transcript under the LWE assumption. The server learns only the timing of each communication round. It does not learn which partitions or subgroups are queried (protected by LWE), the query vector, or individual distances. The centroid entry computation is performed entirely on the client side using locally stored centroids and produces no communication.
\end{proof}

\section{Experimental Evaluation}
\label{sec:eval}

\subsection{Setup}

\subsubsection{Datasets}
We use \rev{five} datasets which are chosen to cover diverse scales, dimensionalities, and data distributions. SIFT1M and SIFT10M contain 128-dimensional SIFT~\cite{DBLP:journals/ijcv/Lowe04} descriptors extracted from natural images, sharing the same feature space but differing by $10\times$ in scale, which isolates the effect of dataset size on performance. Deep1M contains 96-dimensional CNN features from the Yandex Deep-1B benchmark~\cite{DBLP:conf/cvpr/YandexL16}, representing a different feature distribution at the 1M scale. MSMARCO~\cite{DBLP:conf/nips/NguyenRSGTMD16} contains 8.8M full 768-dimensional passage vectors from a real-world information retrieval benchmark, representing the setting with high dimensionality and semantic text features. \rev{AGNEWS\footnote{\rev{\url{https://huggingface.co/datasets/vector-index-bench/vibe/blob/main/agnews-pplx-1024-cosine-int8.hdf5}}} contains 769K 1024-dimensional text embeddings from the recent VIBE vector-search benchmark, representing higher-dimensional modern text-embedding models. Accuracy is measured by Recall@10 on SIFT1M, SIFT10M, Deep1M, and AGNEWS, and by MRR@10 on MSMARCO.}


\begin{table*}[t]
\centering
\caption{\rev{Overall comparison between \sysname and Pacmann. \textbf{Bold} indicates the better result; Acc. denotes accuracy; Comp.\ and Comm.\ denote per-query computation and communication.}}
\label{tab:overall comparison}
\small
\setlength{\tabcolsep}{3.5pt}
\renewcommand{\arraystretch}{0.8}
\begin{tabular}{cc c ccc cccc c}
\toprule
\multirow{2}{*}{\textbf{Dataset}} & \multirow{2}{*}{\textbf{Scheme}} & \multirow{2}{*}{\textbf{Acc.}} & \multicolumn{3}{c}{\textbf{Preprocessing}} & \multicolumn{4}{c}{\textbf{Online (per query)}} & \multirow{2}{*}{\textbf{Storage}} \\
\cmidrule(lr){4-6} \cmidrule(lr){7-10}
 & & & Graph & PIR & Comm. & Comm. & Rounds & Comp. & Latency & \\
\midrule
\multirow{2}{*}{SIFT1M}
 & \sysname & \textbf{0.921$\pm$0.005} & 111.8\,s & 6.45\,s & \textbf{92.79\,MB} & \textbf{0.75\,MB} & \textbf{5} & \textbf{38.5$\pm$1.0\,ms} & \textbf{0.289$\pm$0.001\,s} & \textbf{92.79\,MB} \\
 & Pacmann  & 0.901$\pm$0.006 & 78.19\,s & 2.65\,s & 610.35\,MB & 22.89\,MB & 16 & 139.5$\pm$18.2\,ms & 0.855$\pm$0.005\,s & 212.43\,MB \\
\midrule
\multirow{2}{*}{SIFT10M}
 & \sysname & \textbf{0.920$\pm$0.002} & 32.4\,min & 174.95\,s & \textbf{285.27\,MB} & \textbf{3.62\,MB} & \textbf{7} & \textbf{474.0$\pm$11.2\,ms} & \textbf{0.824$\pm$0.011\,s} & \textbf{285.27\,MB} \\
 & Pacmann  & 0.903$\pm$0.007 & 38.09\,min & 29.01\,s & 5.96\,GB & 87.42\,MB & 24 & 680.9$\pm$33.4\,ms & 1.489$\pm$0.023\,s & 776.94\,MB \\
\midrule
\multirow{2}{*}{Deep1M}
 & \sysname & \textbf{0.922$\pm$0.003} & 86.30\,s & 5.57\,s & \textbf{85.80\,MB} & \textbf{1.24\,MB} & \textbf{7} & \textbf{55.4$\pm$1.4\,ms} & \textbf{0.405$\pm$0.001\,s} & \textbf{85.80\,MB} \\
 & Pacmann  & 0.902$\pm$0.003 & 85.39\,s &1.84\,s & 488.28\,MB & 18.46\,MB & 16 & 116.2$\pm$4.5\,ms & 0.849$\pm$0.006\,s & 170.55\,MB \\
\midrule
\multirow{2}{*}{AGNEWS}
 & \sysname & \textbf{0.936$\pm$0.001} & 549.5\,s & 45.48\,s & \textbf{167.23\,MB} & \textbf{0.89\,MB} & \textbf{3} & \textbf{90.7$\pm$5.2\,ms} & \textbf{0.241$\pm$0.005\,s} & \textbf{167.23\,MB}\\
 & Pacmann  & 0.911$\pm$0.011 & 681\,s & 7.05\,s & 3.03\,GB & 82.88\,MB & 10 & 222.8$\pm$11.4\,ms & 0.544$\pm$0.002\,s & 1.10\,GB \\
\midrule
\multirow{2}{*}{MSMARCO}
 & \sysname & \textbf{0.253$\pm$0.007} & 104.88\,min & 724.03\,s & \textbf{476.04\,MB} & \textbf{7.45\,MB} & \textbf{5} & \textbf{1520.8$\pm$15.9\,ms} & 1.771$\pm$0.016\,s & \textbf{476.04\,MB} \\
 & Pacmann  & 0.240$\pm$0.008 & 193.91\,min & 83.48\,s & 26.35\,GB & 404.78\,MB & 24 & 1534.9$\pm$21.1\,ms & \textbf{1.557$\pm$0.012\,s} & 3.60\,GB \\
\bottomrule
\end{tabular}
\end{table*}

\subsubsection{Baselines}
\rev{We compare \sysname against two single-server graph-based PANNS systems that follow the per-vertex cost model, Pacmann~\cite{DBLP:conf/iclr/ZhouSF25} and Compass~\cite{DBLP:conf/osdi/ZhuPZP25}.} Pacmann is the state-of-the-art single-server PIR-based PANNS scheme, which uses Piano PIR~\cite{DBLP:conf/sp/ZhouPZS24} and issues one PIR query per visited vertex during graph traversal. \rev{Compass protects the query with ORAM instead of PIR.} We also compare against Tiptoe~\cite{DBLP:conf/sosp/HenzingerDCZ23}, a cluster-based scheme that achieves single-round search via somewhat homomorphic encryption. \rev{We further compare against Panther~\cite{DBLP:conf/ccs/LiHZ0L0C25}, the most recent cluster-based scheme, which co-designs a batched PIR over the clusters with homomorphic distance evaluation and a garbled-circuit top-$k$.} \rev{\sysname, Pacmann, and Compass accumulate communication over rounds, whereas Tiptoe and Panther are single-round. We vary Panther's budget by the number of clusters it accesses.} Pacmann's communication and computation include the amortized cost of each PIR query, because without amortization it would have to rebuild the hint by downloading the entire database. Pacmann's amortization computation is excluded from latency, since it runs after the result is returned. \rev{For Panther we count only the communication that protects query privacy and omit the part that protects database privacy. We report latency only for \sysname and Pacmann. The server index of Compass and Panther grows to $\ge200$\,GB on several datasets and must be held on disk, which severely degrades their performance.}

\subsubsection{Implementation and environment.}
\sysname is implemented in C++ with OpenMP parallelism. All experiments are conducted on a single machine equipped with an Intel Core i7-10700 CPU @ 2.60\,GHz (8 cores / 16 threads) and 64\,GB RAM, running Ubuntu 20.04 Server. For PIR operations, we use 32-bit LWE ciphertexts with $n=1024$ and $\log q= 32$, \rev{which attains $\lambda\approx128$-bit security, following SimplePIR's parameterization~\cite{DBLP:conf/uss/HenzingerHCMV23}.} We build the proximity graph using HNSW~\cite{DBLP:journals/pami/MalkovY20} with degree $M_0=32$ and construction parameter $\text{efConstruction}=200$, retaining only the level-0 graph. All vectors are scalar-quantized to 8 bits for compact PIR storage. Graph partitioning is performed using METIS~\cite{DBLP:journals/siamsc/KarypisK98} with $C=\lfloor \sqrt{N/d} \rfloor$ partitions. For \nbrpir, each neighbor ID is split-encoded with $b=7$ bits per part and $t=\lceil (\lceil \log_2 C \rceil + \lceil \log_2 K \rceil) / b \rceil$ parts, yielding $t=3$ for 1M-scale datasets and $t=4$ for 10M-scale datasets. The subgroup size $S = \lfloor \sqrt{N/(M_0 {\cdot} t)} \rfloor$. The default search configuration uses $k=10$, $\mathit{ef}=50$, and $K_e = 1$. \rev{The per-dataset search parameters $(B_v, B_n, \alpha, t_c)$ are set to $(2, 2, 0.82, 1)$ for SIFT1M, $(2, 2, 0.78, 1)$ for SIFT10M, $(3, 2, 0.50, 1)$ for Deep1M, $(4, 3, 0.65, 1)$ for MSMARCO, and $(2, 3, 0.50, 2)$ for AGNEWS. We choose the minimum $(B_v, B_n)$ beyond which the accuracy gain is negligible, and the maximum $\alpha$ whose accuracy loss is negligible.} \rev{A search issues one \vecpir query for the entry partitions and then $T$ iterations, each with one \nbrpir and one \vecpir query, so it takes $1+2T$ communication rounds. Increasing $T$ accumulates communication in fixed increments and raises accuracy until it flattens.} Latency is defined as computation time plus communication delay, where each communication round incurs a 50\,ms round-trip time (RTT). \rev{All results are reported as the averages (mean$\pm$std) over 10 repeated executions. }

\subsection{Overall Comparison}
\label{sec:main_results}

Table~\ref{tab:overall comparison} presents the overall comparison between \sysname and Pacmann. \rev{For a fair comparison, we configure \sysname and Pacmann with the minimum number of communication rounds that reaches $\text{Recall@10} = 0.9$ and $\text{MRR@10} = 0.24$}


\subsubsection{Online communication}
\rev{\sysname achieves a 93--99\% reduction in per-query online communication across all datasets: 93\% on Deep1M (1.24\,MB vs.\ 18.46\,MB), 96\% on SIFT1M (0.75\,MB vs.\ 17.17\,MB) and SIFT10M (3.62\,MB vs.\ 87.42\,MB), 98\% on MSMARCO (7.45\,MB vs.\ 404.78\,MB), and 99\% on AGNEWS (0.89\,MB vs.\ 82.88\,MB)}. This improvement stems from the amortization effect: each \vecpir query returns distances to all $K$ vertices in a partition simultaneously, so \sysname retrieves thousands of distances per round while Pacmann retrieves information for only one vertex per PIR query. \rev{The reduction is more pronounced on datasets with higher dimensionality, which matches the amortization analysis in \S\ref{sec:comm}.}

\subsubsection{Accuracy}
Despite using far less communication, \sysname exceeds Pacmann's accuracy on all datasets: \rev{Recall@10 rises by 0.017--0.025 on the four recall datasets (e.g., $0.936\pm0.001$ vs.\ $0.911\pm0.011$ on AGNEWS), and MRR@10 from $0.240\pm0.008$ to $0.253\pm0.007$ on MSMARCO (Table~\ref{tab:overall comparison}). The advantage arises because each amortized \vecpir query returns distances to all vertices in a partition, providing a far richer per-round view than Pacmann's per-vertex retrieval. This richer view enables jump-style navigation such as \emph{TopCand} exploration (\S\ref{sec:opt}) and thus higher accuracy in fewer rounds.}

\begin{figure*}[t]
\centering
\begin{subfigure}[b]{0.19\textwidth}
    \includegraphics[width=\textwidth]{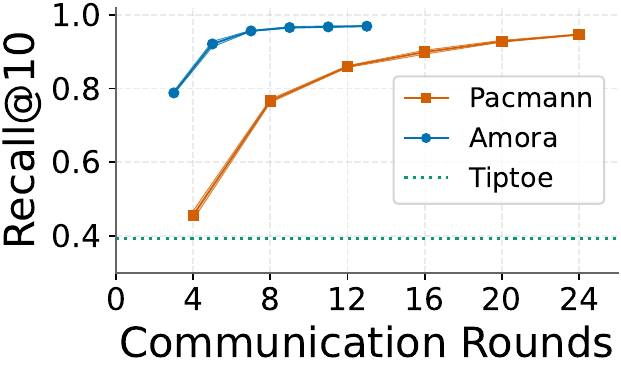}\\
    \includegraphics[width=\textwidth]{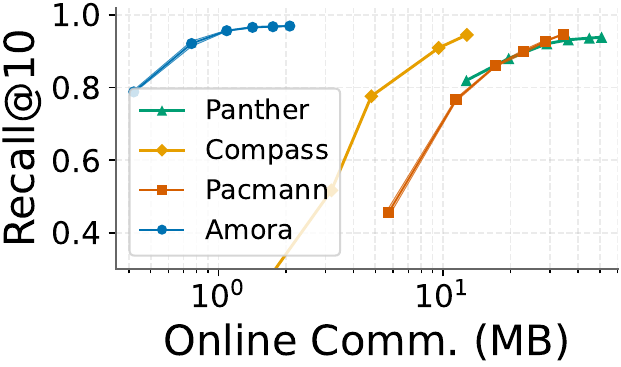}\\
    \includegraphics[width=\textwidth]{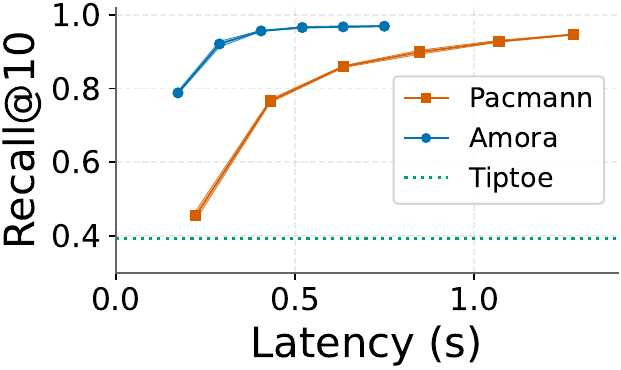}
    \caption{SIFT1M}
\end{subfigure}
\hfill
\begin{subfigure}[b]{0.19\textwidth}
    \includegraphics[width=\textwidth]{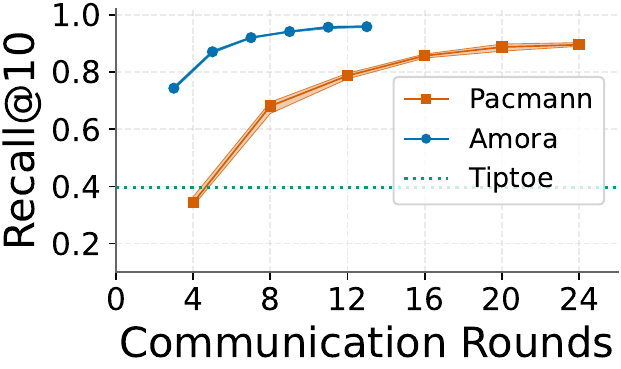}\\
    \includegraphics[width=\textwidth]{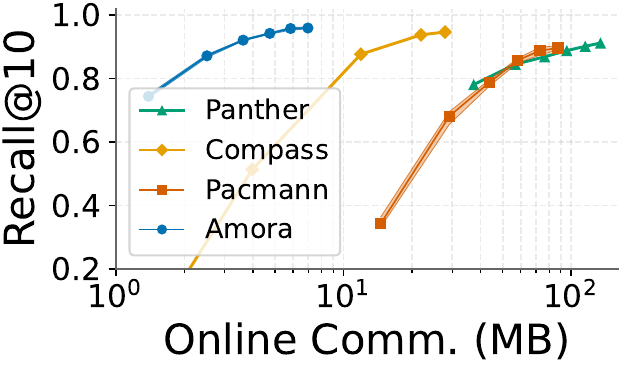}\\
    \includegraphics[width=\textwidth]{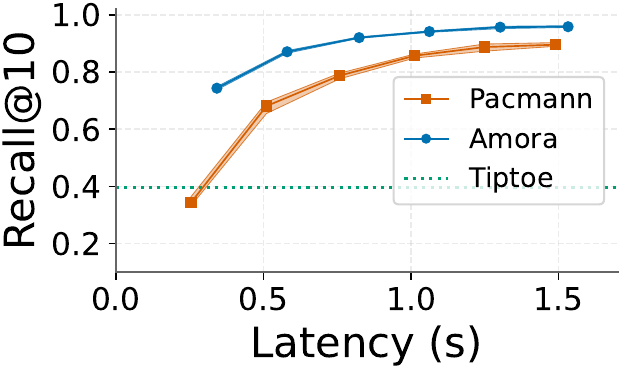}
    \caption{SIFT10M}
\end{subfigure}
\hfill
\begin{subfigure}[b]{0.19\textwidth}
    \includegraphics[width=\textwidth]{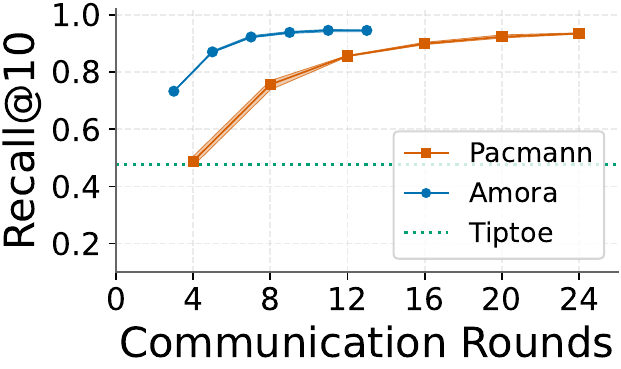}\\
    \includegraphics[width=\textwidth]{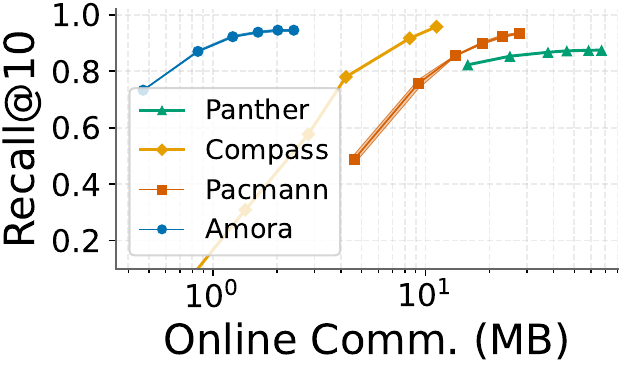}\\
    \includegraphics[width=\textwidth]{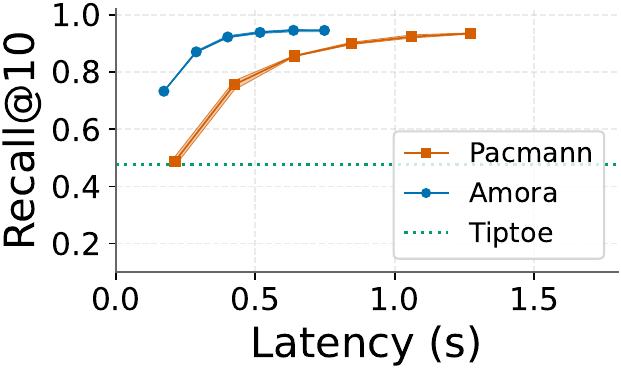}
    \caption{Deep1M}
\end{subfigure}
\hfill
\begin{subfigure}[b]{0.19\textwidth}
    \includegraphics[width=\textwidth]{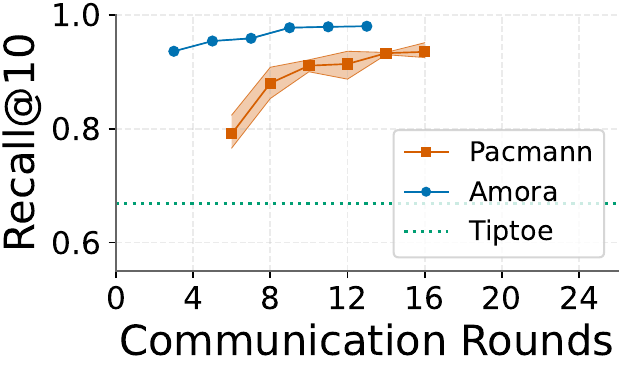}\\
    \includegraphics[width=\textwidth]{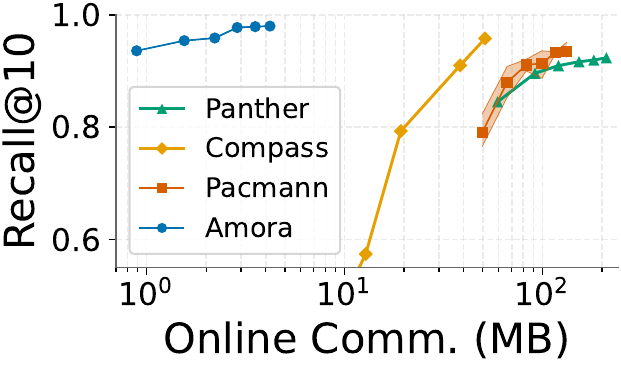}\\
    \includegraphics[width=\textwidth]{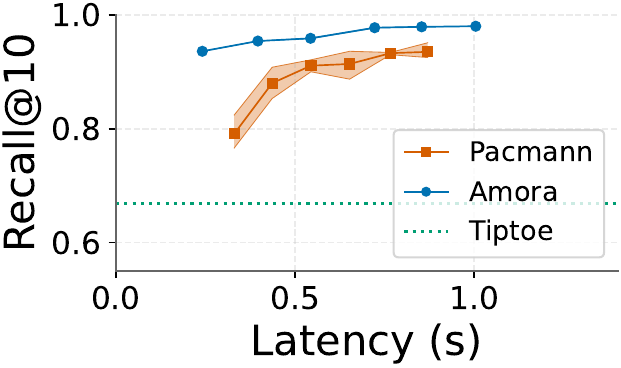}
    \caption{AGNEWS}
\end{subfigure}
\hfill
\begin{subfigure}[b]{0.19\textwidth}
    \includegraphics[width=\textwidth]{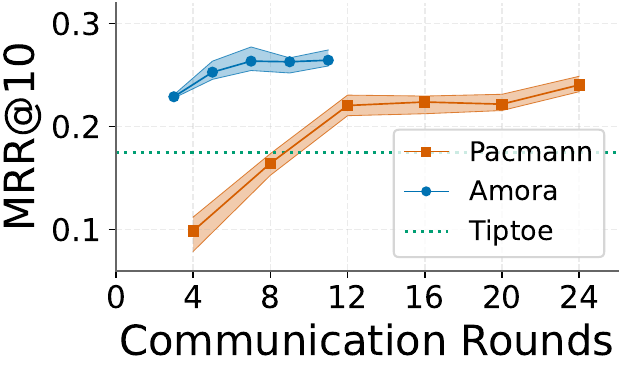}\\
    \includegraphics[width=\textwidth]{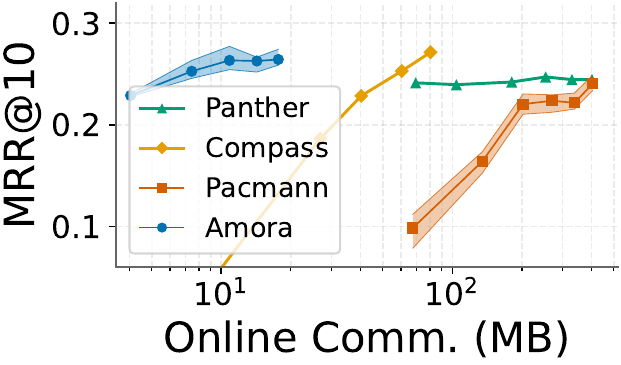}\\
    \includegraphics[width=\textwidth]{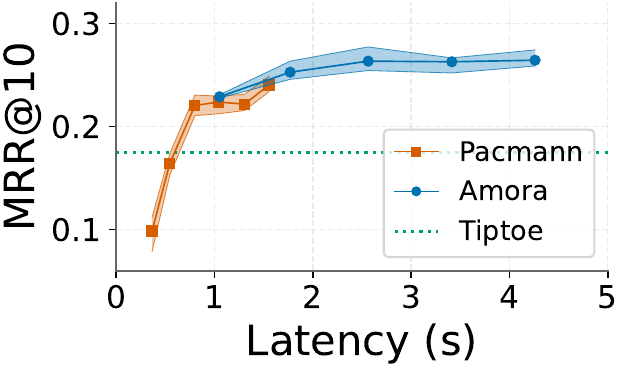}
    \caption{MSMARCO}
\end{subfigure}
\caption{\rev{Accuracy vs.\ communication rounds (top), online communication (middle), and latency (bottom) per query.}}
\label{fig:recall-rounds}
\label{fig:recall-comm}
\label{fig:recall-latency}
\end{figure*}

\subsubsection{Preprocessing and storage}
\sysname's offline communication (hint matrices + shared random matrices) is 82--98\% smaller than Pacmann's, since Pacmann requires downloading the entire database for Piano's offline phase. Client-side storage exhibits the same advantage, as \sysname stores only the hint and the centroids of all partitions rather than a database replica.

\subsubsection{Computation and latency}
\rev{\sysname achieves 45--66\% lower latency at matched accuracy on SIFT1M ($0.289\pm0.001$\,s vs.\ $0.855\pm0.005$\,s), Deep1M ($0.405\pm0.001$\,s vs.\ $0.849\pm0.006$\,s), SIFT10M ($0.824\pm0.011$\,s vs.\ $1.489\pm0.023$\,s), and AGNEWS ($0.241\pm0.005$\,s vs.\ $0.544\pm0.002$\,s), because the reduction in communication rounds more than compensates for the SimplePIR computation overhead.} On MSMARCO, however, \sysname's latency (1.771\,s) slightly exceeds Pacmann's (1.557\,s). \rev{As \S\ref{sec:comp} shows, the computation gap grows with $N{\cdot}d$, so the crossover between the two schemes is governed by this product. MSMARCO has the largest $N{\cdot}d$ among the five datasets ($\approx6.8{\times}10^{9}$), while \sysname remains faster on AGNEWS ($N{\cdot}d\approx7.9{\times}10^{8}$) and SIFT10M ($N{\cdot}d\approx1.3{\times}10^{9}$), so the crossover lies between $1.3{\times}10^{9}$ and $6.8{\times}10^{9}$. The small latency gap on MSMARCO further suggests that the crossover is close to $6.8{\times}10^{9}$. The crossover also depends on the network delay. Because \sysname saves $19$ rounds per query on MSMARCO against a latency gap of $0.21$\,s, it becomes faster again once the RTT exceeds about $61$\,ms.}

\textbf{Scalability discussion.} Table~\ref{tab:overall comparison} also reveals how the costs scale. First, they grow sublinearly with $N$: from SIFT1M to SIFT10M ($10\times$ vectors), per-query communication at the same 7 rounds grows from 1.09\,MB to 3.62\,MB ($3.3\times$), and hint storage from 92.79\,MB to 285.27\,MB ($3.1\times$), matching the $O(\sqrt{N{\cdot}d})$ per-round cost in \S\ref{sec:comm}. Second, dimensionality shifts the cost composition without weakening the advantage: at the 1M scale, communication is dominated by the $d$-independent \nbrpir, so SIFT1M and Deep1M cost almost the same (1.09\,MB vs.\ 1.24\,MB at 7 rounds); on MSMARCO ($d=768$) and AGNEWS ($d=1024$), the wide $\mathsf{db}_{\text{vec}}$ columns become dominant (7.45\,MB and 0.89\,MB per query), yet the reductions over Pacmann are the largest of the five datasets (98\% and 99\%).

\subsection{\rev{Accuracy vs.\ Online Communication}}
\label{sec:recall_rounds}

Figure~\ref{fig:recall-rounds} shows how accuracy improves as the number of communication rounds increases \rev{(top row) and as online communication accumulates (middle row). }

\subsubsection{\rev{\sysname surpasses both graph-based baselines with far less communication}}
\label{sec:tradeoff_comm}
\rev{\sysname surpasses Pacmann's 24-round accuracy within 7 rounds on SIFT1M ($0.956{\pm}0.001$ vs.\ $0.947{\pm}0.002$), SIFT10M ($0.920{\pm}0.002$ vs.\ $0.903{\pm}0.007$), and MSMARCO ($0.263{\pm}0.009$ vs.\ $0.240{\pm}0.008$), and within 11 rounds on Deep1M ($0.946{\pm}0.003$ vs.\ $0.935{\pm}0.002$). On AGNEWS, 3 rounds already match Pacmann's 16-round result ($0.936{\pm}0.001$ vs.\ $0.935{\pm}0.014$). \sysname reaches these points with $93\%$--$99\%$ less online communication, for example 1.09\,MB vs.\ 34.34\,MB on SIFT1M and 0.89\,MB vs.\ 132.61\,MB on AGNEWS. The gap is not tied to one operating point. In the middle row of Figure~\ref{fig:recall-comm}, \sysname's most expensive setting still costs less than Pacmann's cheapest one on every dataset, 2.08\,MB vs.\ 5.72\,MB on SIFT1M and 4.20\,MB vs.\ 49.73\,MB on AGNEWS.} This gap stems from the amortization effect: each \vecpir query returns distances to all $K$ vertices in a partition simultaneously, providing a much denser view of the search space per round than Pacmann's per-vertex  retrieval. \rev{At matched accuracy, \sysname saves $77\%$--$98\%$ of Compass's online communication, for example 3.62\,MB vs.\ 15.92\,MB on SIFT10M and 0.89\,MB vs.\ 44.90\,MB on AGNEWS. Compass also follows the per-vertex cost model, and Ring ORAM pads every bucket it touches with dummy blocks, so much of each transfer carries no useful data. Because each block holds a full vector, the saving grows with the dimension at a comparable scale, from $86\%$ on Deep1M ($d{=}96$) to $93\%$ on SIFT1M ($d{=}128$) and $98\%$ on AGNEWS ($d{=}1024$), while the two larger datasets fall to SIFT10M's $77\%$ and MSMARCO's $86\%$.}

\subsubsection{Convergence speed is related to scale and dimensionality}
\rev{SIFT1M ($d=128$) reaches $0.921\pm0.005$ at 5 rounds and $0.967\pm0.002$ at 11 rounds, while Deep1M ($d=96$) starts slower ($0.871\pm0.003$ at 5 rounds) and reaches $0.946\pm0.003$ at 11 rounds.} Scaling from SIFT1M to SIFT10M ($10\times$ more vectors, same $d=128$) slows convergence to $0.871\pm0.004$ at 5 rounds, as the larger graph requires more partitions to be explored. In fact, scale and dimensionality act through the partition count $C^*=\sqrt{N/d}$ (Eq.~\ref{eq:optC}): each \vecpir query reveals one partition, so a smaller $N$ or a larger $d$ means fewer partitions and faster convergence. \rev{AGNEWS converges fastest ($C^*=27$, $0.936{\pm}0.001$ at 3 rounds), while Deep1M ($C^*=102$) and SIFT10M ($C^*=279$) are the slowest ($0.733{\pm}0.001$ and $0.744{\pm}0.004$ at the same budget). MSMARCO ($C^*=107$) also gains little after 7 rounds ($0.263\pm0.009$ at 7 rounds vs.\ $0.264\pm0.006$ at 11).}

\subsubsection{Iterative graph traversal outperforms single-round retrieval}
Tiptoe's single-round cluster-based approach achieves substantially lower accuracy on all five datasets: 0.39 on SIFT1M, 0.48 on Deep1M, 0.40 on SIFT10M, 0.67 on AGNEWS, and 0.17 on MSMARCO, all below \sysname at only 3 rounds. \rev{We enlarge Panther's communication budget by accessing more clusters to achieve a higher accuracy. On the four recall datasets, \sysname saves $97\%$--$99\%$ of the communication and improves accuracy, e.g., 0.75\,MB vs.\ 26.2\,MB at $0.921{\pm}0.005$ vs.\ 0.907 on SIFT1M, and 0.89\,MB vs.\ 120.5\,MB at $0.936{\pm}0.001$ vs.\ 0.910 on AGNEWS. On MSMARCO, \sysname saves $97\%$ of the communication (7.45\,MB vs.\ 253.3\,MB) and improves MRR@10 by 0.006 ($0.253{\pm}0.007$ vs.\ 0.247).} The reason is that cluster-based retrieval scans only a small number of nearby clusters without iterative refinement across partition boundaries, so nearest neighbors residing in other partitions are ignored.

\subsection{Accuracy vs.\ Latency}
\label{sec:tradeoff_latency}

\rev{The bottom row of Figure~\ref{fig:recall-latency}} compares the accuracy-latency tradeoff by varying the number of rounds.


\subsubsection{\rev{Fewer rounds translate into lower latency}}
\rev{Figure~\ref{fig:recall-latency} shows that \sysname's accuracy results flatten early: by 0.41\,s on SIFT1M, 0.52\,s on Deep1M, 1.06\,s on SIFT10M, and 0.72\,s on AGNEWS, the remaining accuracy gain is within about 0.02. At 0.85, 0.84, 1.01, and 0.54\,s, Pacmann has reached only $0.899{\pm}0.006$, $0.899{\pm}0.004$, $0.857{\pm}0.007$, and $0.911{\pm}0.009$, versus \sysname's $0.956{\pm}0.001$, $0.939{\pm}0.003$, $0.942{\pm}0.002$, and $0.977{\pm}0.001$ at the latencies above. The latency standard deviation stays below $0.05$\,s at all these points. On these datasets, computation accounts for 13\%, 14\%, 58\%, and 38\% of \sysname's per-query latency, and for merely 6\%, 6\%, 19\%, and 8\% of Pacmann's; the remainder is network round-trip delay. Latency is therefore governed by the round count, even on SIFT10M where computation takes over half of \sysname's latency, and \sysname's fewer rounds translate directly into the 45--66\% latency reductions.}

\subsubsection{\rev{Computation dominates latency on MSMARCO}}
\rev{MSMARCO's product of scale and dimensionality is the largest among the five datasets ($N{\cdot}d\approx6.8{\times}10^{9}$), making each matrix multiplication expensive: server computation reaches $1.521\pm0.016$\,s at 5 rounds, accounting for 86\% of total latency, so each additional iteration costs \sysname 0.62--0.75\,s of computation, while Pacmann spends only about 0.06\,s per round, mostly on the RTT. Even so, \sysname matches Pacmann at the low-latency end ($\approx 1.05$\,s) and reaches higher accuracy ($0.253\pm0.007$ vs.\ $0.240\pm0.008$). }

\begin{figure*}[t]
\centering
\begin{subfigure}[b]{0.195\textwidth}
    \includegraphics[width=\textwidth]{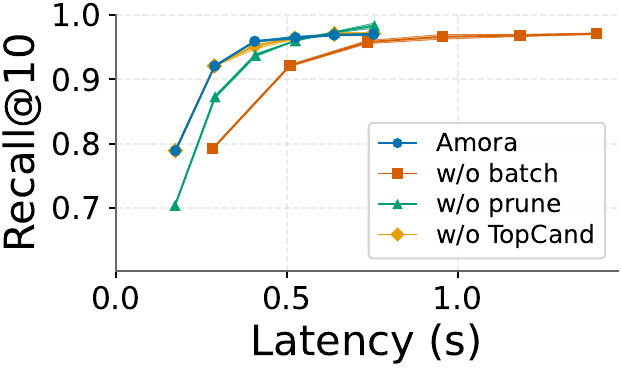}
    \caption{SIFT1M}
\end{subfigure}
\hfill
\begin{subfigure}[b]{0.195\textwidth}
    \includegraphics[width=\textwidth]{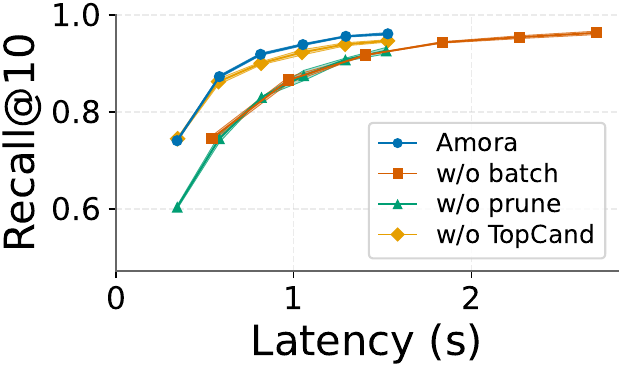}
    \caption{SIFT10M}
\end{subfigure}
\hfill
\begin{subfigure}[b]{0.195\textwidth}
    \includegraphics[width=\textwidth]{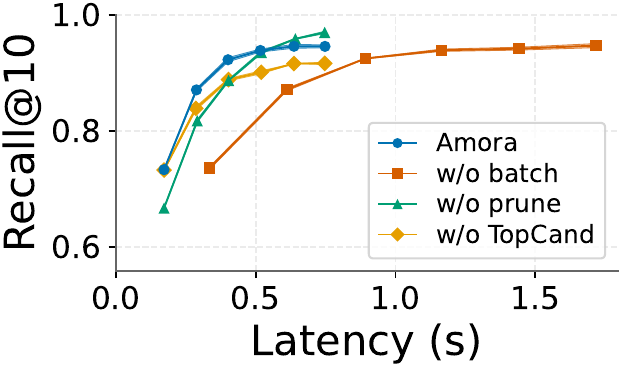}
    \caption{Deep1M}
\end{subfigure}
\hfill
\begin{subfigure}[b]{0.195\textwidth}
    \includegraphics[width=\textwidth]{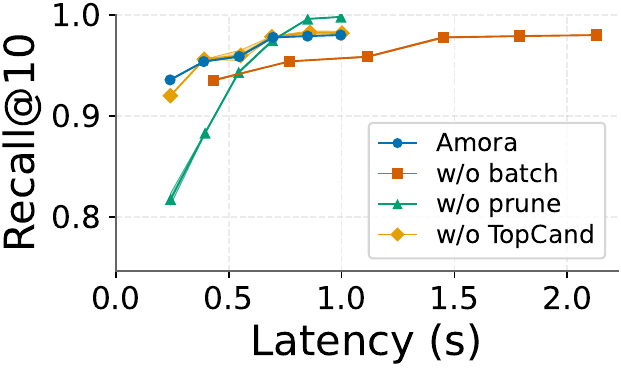}
    \caption{AGNEWS}
\end{subfigure}
\hfill
\begin{subfigure}[b]{0.195\textwidth}
    \includegraphics[width=\textwidth]{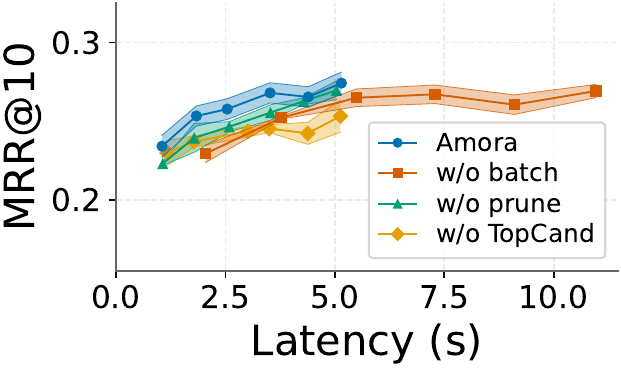}
    \caption{MSMARCO}
\end{subfigure}
\caption{\rev{Optimization techniques ablation study.}}
\label{fig:ablation}
\end{figure*}

\subsection{Ablation Study}
\label{sec:ablation}

\subsubsection{Optimization techniques}
We ablate three key optimization techniques described in \S\ref{sec:opt} to quantify their individual contributions. For each ablation, we disable one technique (denoted "w/o") while keeping all others enabled. \rev{Figure~\ref{fig:ablation} reports accuracy against latency as the iteration count $T$ grows.}

\noindent\textbf{Batch PIR reduces latency.}
Without batching, each PIR query occupies its own communication round, so one search iteration costs $B_v+B_n$ rounds instead of two. At the operating points in Table~\ref{tab:overall comparison}, the unbatched search needs 9 rounds on SIFT1M, 13 on SIFT10M, 16 on Deep1M, 6 on AGNEWS, and 15 on MSMARCO, adding proportionally more RTT delays. Each unbatched round also requires an independent full database scan, whereas batching amortizes the scan across multiple queries in a single wider matrix multiplication and consumes less memory bandwidth. Across all datasets and round counts, disabling batching raises the computation by $1.4$--$2.0\times$, and at the operating points it grows from 0.039\,s to 0.059\,s on SIFT1M and from 1.588\,s to 3.022\,s on MSMARCO. Figure~\ref{fig:ablation} shows that the "w/o batch" results track \sysname in accuracy but are shifted right along the latency axis. At matched accuracy, the latency rises by $1.6$--$2.3\times$, for example from 0.29\,s to 0.51\,s on SIFT1M and from 1.84\,s to 3.77\,s on MSMARCO. All standard deviations of computation and latency here are below $0.02$\,s.

\noindent\textbf{\emph{TopCand} exploration improves accuracy under a tight budget.}
Disabling \emph{TopCand} exploration forces the search to follow graph edges only. At the operating points in Table~\ref{tab:overall comparison}, \rev{Recall@10 drops by $0.034{\pm}0.004$ on Deep1M, $0.018{\pm}0.004$ on SIFT10M, and $0.016{\pm}0.001$ on AGNEWS, and MRR@10 drops by $0.016{\pm}0.007$ on MSMARCO, while on SIFT1M the gap stays within 0.007 at every round count.} \emph{TopCand} exploration seeds the candidate heap with each partition's closest vertices, so the search reaches promising regions that the candidates' edges cannot within the round budget.

\noindent\textbf{Centroid-based pruning improves accuracy under a tight budget.}
Disabling centroid-based pruning reverts to a first-come-first-served ordering with no pruning. At the operating points in Table~\ref{tab:overall comparison}, accuracy falls on all five datasets: \rev{Recall@10 by $0.118{\pm}0.006$ on AGNEWS, $0.088{\pm}0.002$ on SIFT10M, $0.048{\pm}0.002$ on SIFT1M, and $0.036{\pm}0.004$ on Deep1M, and MRR@10 by $0.014{\pm}0.011$ on MSMARCO.} Each round admits only $B_v$ partitions, and centroid-based pruning directs this limited budget toward the closest partitions rather than toward whichever ones are discovered first. The gain is therefore largest under \sysname's tight round budgets.

\subsubsection{\rev{Dual-PIR split layout}}
\label{sec:combined-ablation}
\rev{We validate the necessity of the dual-PIR split against a combined layout that stores the vector and the neighbor list of each vertex in the same column (Figure~\ref{fig:split}). We build the combined database with its own optimal column capacity $S_c= \sqrt{N/(d+M_0\,t)}$, and we drive it with the same centroid entry and graph navigation as \sysname. We compare the two layouts under two conditions. First, we give the combined layout the same online communication budget that the split layout spends at its operating point in Table~\ref{tab:overall comparison}. Within this budget, the accuracy of the combined layout collapses to $0.004$ on SIFT1M, $0.044$ on SIFT10M, $0.027$ on Deep1M, and $0$ on MSMARCO, and one combined query on AGNEWS already exceeds the whole budget. Second, we measure the cost for the combined layout to reach the same accuracy as the split layout, and Figure~\ref{fig:combined} reports the results. The combined layout pays $25\times$ to $161\times$ more online communication and $58\times$ to $286\times$ more computation. The gap comes from the heavy per-vertex payload. A combined column carries $d+M_0\,t$ values per vertex, so one query returns only a few vertices and the search issues far more queries. The dual-PIR split is therefore a necessary design.}

\begin{figure}[t]
\centering
\includegraphics[width=0.7\columnwidth]{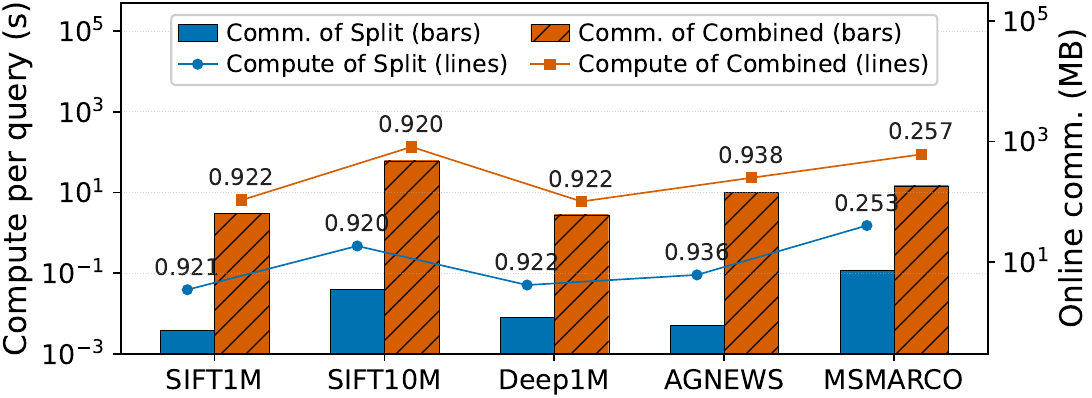}
\caption{\rev{Combined layout vs.\ dual-PIR split layout.}}
\label{fig:combined}
\end{figure}

\subsection{\rev{Parameter Sensitivity Study}}
\label{sec:sensitivity}

\subsubsection{\rev{Result-heap capacity $\mathit{ef}$}}
\label{sec:ef}

\rev{Figure~\ref{fig:ef} sweeps $\mathit{ef}$ from $k=10$ to $300$ on all five datasets at their operating points in Table~\ref{tab:overall comparison}. In Figure~\ref{fig:ef}(a), accuracy rises slightly from $\mathit{ef}=k$ and then stays flat. The largest gain is on SIFT10M, where Recall@10 grows from $0.895{\pm}0.002$ at $\mathit{ef}=10$ to $0.920{\pm}0.002$ at $\mathit{ef}=50$. Over the same range, the gain is $0.012{\pm}0.006$ on MSMARCO, $0.009{\pm}0.005$ on Deep1M, $0.004{\pm}0.001$ on SIFT1M, and below $0.001$ on AGNEWS. Beyond $\mathit{ef}=50$ and up to $\mathit{ef}=300$, no dataset moves by more than $0.007$, which is at the level of the standard deviation across the five repeated runs ($0.001$--$0.008$). In Figure~\ref{fig:ef}(b), the communication is exactly constant on each dataset (0.76, 3.62, 1.24, 0.89, and 7.45\,MB) and the latency stays flat, because the number of PIR queries is fixed by $T$, $B_v$, and $B_n$. The accuracy trend follows from the role of $\mathit{ef}$ as the admission threshold of the result heap. When $\mathit{ef}$ is close to $k$, the tight threshold rejects some promising candidates, and the \nbrpir batch is padded with random subgroups, so part of the budget is wasted and accuracy drops slightly. Once $\mathit{ef}$ is large enough for the closest candidates to fill $B_n$ in every round, a larger $\mathit{ef}$ only admits distant vertices that the search never pops, so the explored candidates and the accuracy stay the same. The number of popped and expanded candidate vertices confirms this. From $\mathit{ef}=10$ to $30$ the count rises from 122 to 131 on SIFT1M, 180 to 201 on SIFT10M, 179 to 203 on Deep1M, and 199 to 268 on MSMARCO, and then changes by less than $4\%$ up to $\mathit{ef}=300$. On AGNEWS it stays at 62 throughout, so the accuracy does not move. We therefore set $\mathit{ef}=50$, safely inside the flat range.}

\subsubsection{\rev{TopCand count $t_c$}}
\label{sec:tc}
\rev{Figure~\ref{fig:tc} sweeps $t_c{\in}\{1,2,4,8\}$ on all five datasets at their operating points in Table~\ref{tab:overall comparison}, with all other parameters fixed. The communication is identical across the sweep, because the number of PIR queries is fixed by $T$, $B_v$, and $B_n$. Therefore, $t_c$ affects only the accuracy. On SIFT1M, SIFT10M, and Deep1M, the accuracy declines as $t_c$ grows, with Recall@10 on SIFT1M dropping from $0.921{\pm}0.005$ at $t_c=1$ to $0.866{\pm}0.001$ at $t_c=8$. On AGNEWS, $t_c=2$ performs best ($0.936{\pm}0.001$ vs.\ $0.919{\pm}0.001$ at $t_c=1$), and on MSMARCO $t_c$ has little effect on the MRR@10, which stays between $0.239$ and $0.247$. Competition inside the candidate heap explains why a large $t_c$ degrades the accuracy. The \emph{TopCand} seeds and the graph neighbors share the heap, and each iteration pops only $B_n$ candidates for the \nbrpir batch. Seeds from the same partition lie close to one another and bring little new information, so a large $t_c$ fills the limited slots with redundant seeds and displaces the edge expansion around the best candidates. On AGNEWS, the single-iteration budget ($T=1$) limits how far the edge expansion can proceed, so a second seed broadens the exploration. We therefore set $t_c=2$ on AGNEWS and $t_c=1$ on the other four datasets.}

\begin{figure}[t]
\centering
\newlength{\efpanelw}\setlength{\efpanelw}{0.34\textwidth}
\begin{subfigure}[b]{\columnwidth}
    \centering
    \includegraphics[width=\efpanelw]{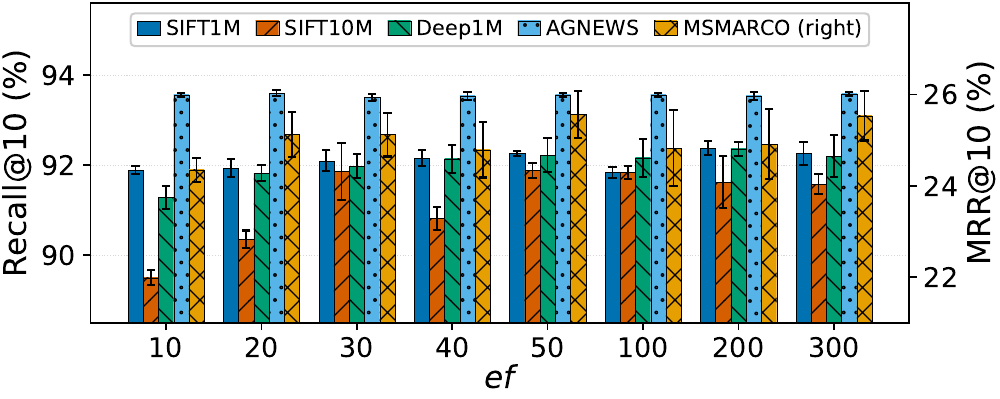}
    \caption{Accuracy.}
\end{subfigure}\\[2pt]
\begin{subfigure}[b]{\columnwidth}
    \centering
    \includegraphics[width=\efpanelw]{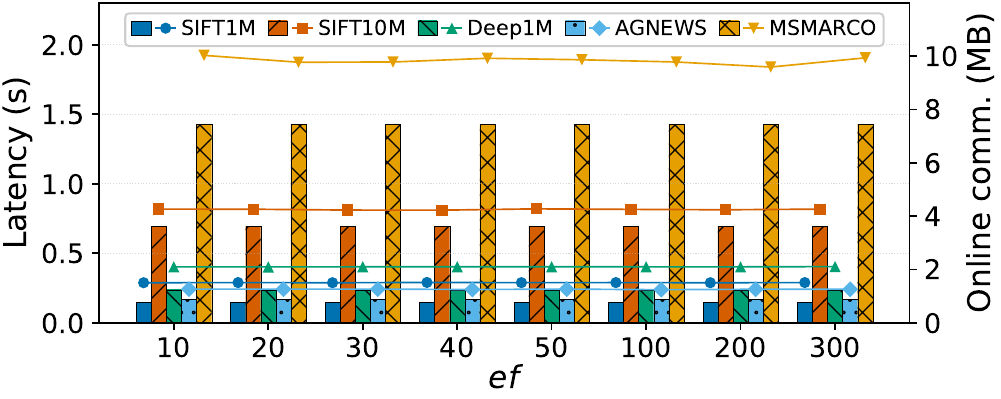}
    \caption{Online communication (bars) and latency (lines).}
\end{subfigure}
\caption{\rev{Sensitivity of the result-heap capacity $\mathit{ef}$.}}
\label{fig:ef}
\end{figure}

\begin{figure}[t]
\centering
\includegraphics[width=0.35\textwidth]{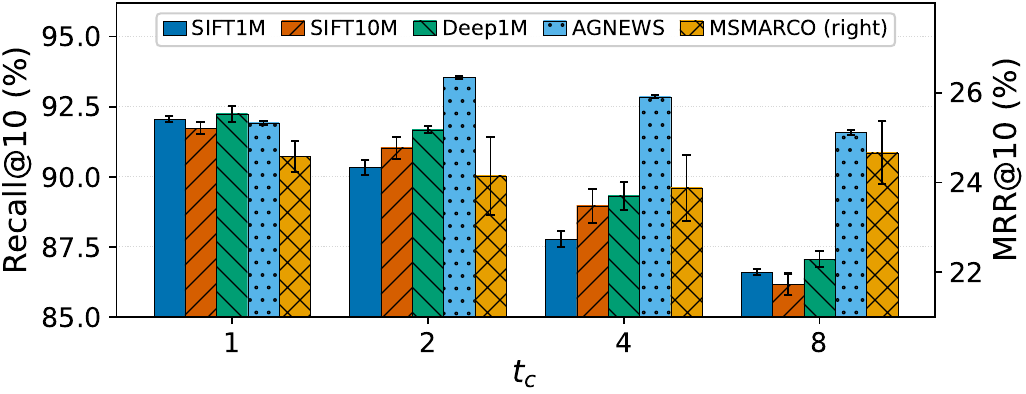}
\caption{\rev{Sensitivity of the \emph{TopCand} count $t_c$.}}
\label{fig:tc}
\end{figure}

\subsubsection{\rev{Partition count $C$}}
\label{sec:csens}
\rev{Figure~\ref{fig:csens} sweeps the partition count over $\{C^*/4, C^*/2, C^*, 2C^*, 4C^*\}$ on all five datasets, where $C^*=\lfloor\sqrt{N/d}\rfloor$ is the default from Eq.~\ref{eq:optC}. Each level uses a rebuilt index and a unified search protocol ($T=3$, $t_c=1$). Figure~\ref{fig:csens}(a) shows that the communication reaches its minimum at $C^*$ on every dataset, and both $C^*/4$ and $4C^*$ pay $1.6\times$ to $1.9\times$ the communication of $C^*$. Figure~\ref{fig:csens}(b) shows that increasing $C$ beyond $C^*$ lowers the accuracy on every dataset, from $C^*$ to $4C^*$ by $0.120$ on Deep1M, $0.103$ on SIFT10M, $0.067$ on SIFT1M, and $0.022$ on both AGNEWS and MSMARCO, because smaller partitions let one \vecpir query return fewer distances. Decreasing $C$ below $C^*$ helps only on some datasets: at $C^*/4$ the accuracy rises on SIFT10M ($+0.054$), MSMARCO ($+0.009$), and SIFT1M ($+0.002$), but falls on Deep1M ($-0.009$) and AGNEWS ($-0.005$). We therefore use $C^*$ on all datasets, since it minimizes the communication and achieves a high level of accuracy.}

\begin{figure}[t]
\centering
\begin{subfigure}[b]{0.464\columnwidth}
    \centering
    \includegraphics[width=\textwidth]{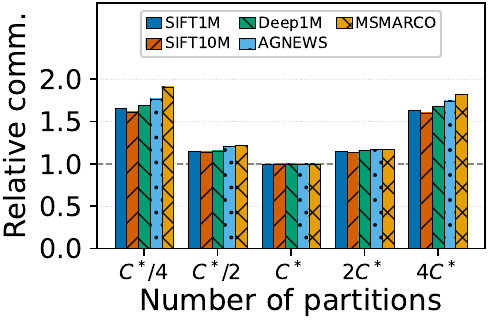}
    \caption{Communication.}
\end{subfigure}%
\begin{subfigure}[b]{0.536\columnwidth}
    \centering
    \includegraphics[width=\textwidth]{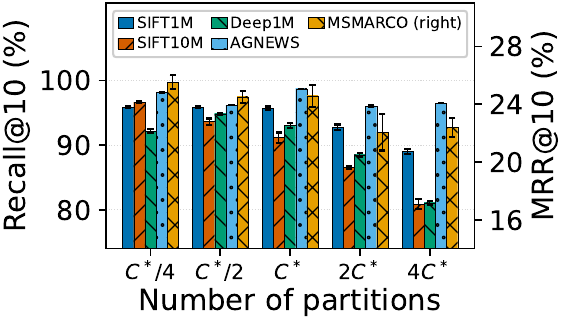}
    \caption{Accuracy.}
\end{subfigure}
\caption{\rev{Sensitivity of the partition count $C$.}}
\label{fig:csens}
\end{figure}





\section{Conclusion}
\label{sec:conclusion}

We presented \sysname, a communication-efficient PANNS system with high accuracy. The core is a dual-PIR split layout built on SimplePIR and graph partitioning: \vecpir retrieves distances to all vertices in a partition at once, and \nbrpir retrieves neighbor lists at subgroup granularity, maximizing the information retrieved per PIR query. Based on the dual-PIR, batch PIR merges concurrent queries to reduce communication rounds and the number of database scans, centroid-based pruning directs the limited PIR budget toward the most promising regions, and \emph{TopCand} exploration seeds jump-style graph navigation from each retrieved partition. Experiments on \rev{five} datasets of different scales and dimensionalities show that \sysname reduces online communication by \rev{93--99\%} over the competitors while achieving higher search accuracy.


\bibliographystyle{ACM-Reference-Format}
\bibliography{references}

\end{document}